\newtheorem{definition}{Definition}
\newcommand{\cM}{\mathcal{M}}
\newcommand{\cW}{\mathcal{W}}
\newcommand*\system{\textsc{LDP-FL}}
\title{LDP-FL: Practical Private Aggregation in Federated Learning \\ with Local Differential Privacy}
\author{
Lichao Sun$^1$\and
Jianwei Qian$^2$\and
Xun Chen$^2$\thanks{Corresponding Author}\\
\affiliations
$^1$Lehigh University\and
$^2$Samsung Research America
\emails
lis221@lehigh.edu,
qianjavier@gmail.com,
xun.chen@samsung.com
}
\begin{document}
%\linenumbers 

\maketitle

\begin{abstract}
Training deep learning models on sensitive user data has raised increasing privacy concerns in many areas. Federated learning is a popular approach for privacy protection that collects the local gradient information instead of raw data. One way to achieve a strict privacy guarantee is to apply local differential privacy into federated learning. However, previous work does not give a practical solution due to two issues. First, the range difference of weights in different deep learning model layers has not been explicitly considered when applying local differential privacy mechanism. Second, the privacy budget explodes due to the high dimensionality of weights in deep learning models and many query iterations of federated learning. In this paper, we proposed a novel design of a local differential privacy mechanism for federated learning to address the abovementioned issues. It makes the local weights update differentially private by adapting to the varying ranges at different layers of a deep neural network, which introduces a smaller variance of the estimated model weights, especially for deeper models. Moreover, the proposed mechanism bypasses the curse of dimensionality by parameter shuffling aggregation. Empirical evaluations on three commonly used datasets in prior differential privacy work, MNIST, Fashion-MNIST and CIFAR-10, demonstrate that our solution can not only achieve superior deep learning performance but also provide a strong privacy guarantee at the same time. 
\end{abstract}

%It is capable of making the data more distinct from its original value and introducing lower variance. 

%the noisy data is close to its original value with high probability, increasing the risk of information exposure. 
% Second, a large variance is introduced to the estimated average, causing poor accuracy. 

%The proposed mechanism achieves a privacy budget of $\epsilon=1$ with 0.97\% accuracy loss on MNIST, a privacy budget of $\epsilon=4$ with 1.32\% accuracy loss on Fashion-MNIST, and a privacy budget of $\epsilon=10$ with 1.09\% accuracy loss on CIFAR-10, respectively

% In this paper, we proposed a new local differential privacy mechanism that makes data more distinct to its original value and introduces lower variance, which achieves the state-of-the-art performance within a small privacy cost $\epsilon$.
% Moreover, our solution bypasses the curse of dimensionality by splitting and shuffling model updates.
% A series of empirical evaluations show that our solution can not only achieve a superior performance but also provide a strong privacy guarantee.
% For example, on MNIST, the central model trained by our system can achieve privacy $\epsilon=1$ with merely 0.97\% accuracy loss. Moreover, experiments on CIFAR-10 achieve $\epsilon=10$ with 1.99\% accuracy loss. 
\section{Introduction}

Many attractive applications involve training models on highly sensitive data, e.g., diagnosis of diseases with medical records, or genetic sequences~\cite{xu2021fedmood,che2021federated}.
%In order to protect the privacy of the training data, various privacy protection works have been proposed in the literature \cite{ma2018pdlm,michie1994machine, nissim2007smooth, samangouei2018defense}. 
In order to protect the privacy of the training data, the federated learning framework is of particular interest since it can provide a well-trained model without touching any sensitive data directly \cite{mcmahan2016communication}.
The original purpose of the federated learning (FL) framework is to share the weights of the model trained on sensitive data instead of data directly.
However, some studies show that the weights also can leak privacy and the original sensitive data can be recovered \cite{papernot2017semi}.
In order to solve the problem, recent works start to use differential privacy to protect the private data in federated learning \cite{nguyen2016collecting,lyu2020privacy,sun2020federated}, but most of them cannot give a practical solution for deep learning on complex datasets due to the trade-off between privacy budget and performance.

\iffalse
In previous approaches, there are several apparent concerns and challenges. First, the noisy data is close to its original value with high probability, increasing the risk of information exposure. For example, the price of a pen is 5. After data perturbation, the protected information is close to 5, such as 5.3, still leaking information. Second, a large variance is introduced to the estimated average, causing poor accuracy. As a result, it requires more communication between the cloud and clients for the learning convergence. Third, the privacy budget explodes due to the high dimensionality of weights in deep learning models. Last, to the best of our knowledge, no existing works have shown an excellent deep learning performance on popular datasets such as MNIST \cite{lecun2010mnist}, Fashion-MNIST (FMNIST) \cite{xiao2017fashion} and CIFAR-10 \cite{krizhevsky2009learning},  with a reasonably small privacy budget.
\fi

Notably, there are at least two remaining challenges in applying local differential privacy (LDP) in FL. First, existing approaches %\cite{duchi2018minimax,bhowmick2018protection,wang2019collecting} 
have assumed a fixed range of weights for simplicity. However, in complicated models such as those used in deep learning tasks, the range of model weights at different neural network layer varies significantly. Assuming weights in all layers to be in a fixed range will introduce a large variance of the estimated model weights, which leads to the poor model accuracy. 
%Our later analysis will further illustrate the necessity of considering a non-fixed range for model weights, especially for deep neural network models. 
Second, when client uploads the local model to the server, the server can explore the private connections of model weights, which causes the privacy budget explosion due to the high dimensionality of deep learning models. 

\iffalse
it brings a natural question: How to get prior knowledge, especially for a complex deep learning method?
In fact, prior knowledge is an extreme assumption and all clients having a consensus at the beginning is not very practical in most scenarios.
Assuming there is no prior knowledge for all clients, a naive solution is to set a broad range that can probability cover all clients.
However, due to the large variance of the estimated model weights, it leads to the poor model accuracy in federated learning.
In practice, previous approaches usually assume a prior knowledge of the input range, and then each of the clients to set a fixed model weights' range.
However, a fixed-range setting could not be applicable for federated learning with a deep neural network due to its complex structure and a wide range of weights in each layer.
\fi

%=====================================================
\begin{figure*}[tb]
\centering
\includegraphics[width=5.4in]{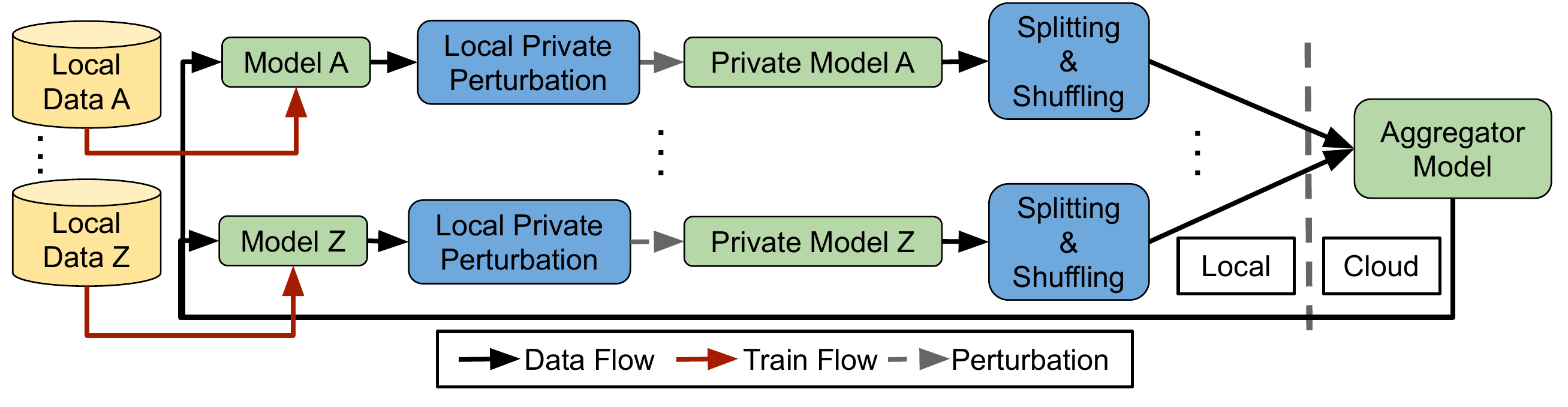}
\caption{The overview of \system. Each client has a private dataset. After training their model locally, clients perturb their model weights with differentially private noise, and locally split and shuffle the weights to make them less likely to be linked together by the cloud. Finally, they send the weights to the cloud for aggregation.} \label{fig:framework}
\vspace{-5pt}
\end{figure*}
%======================================================

In this paper, we proposed Locally Differential Private Federated Learning (\system), a new local differential privacy mechanism to solve the above issues, as shown in Fig. \ref{fig:framework}.
%novelly designed local data perturbation with adaptive range and local model weights parameter shuffling are applied to a typical federated learning system. 
Our main contributions are multifold. First, we propose a new data perturbation with adaptive range by considering that model weights at different deep neural network (DNN) layer could vary significantly. We derive a more general LDP mechanism to illustrate the impact of the range on the variance of the estimated model weights. We also further demonstrate how the proposed adaptive range settings can greatly improve the accuracy of aggregated model, especially in deeper models. To the best of our knowledge, this is the first work that has considered and studied the necessity and effectiveness of adapting to the different model weights' ranges when applying LDP in federated learning. 
Second, we propose a parameter shuffling mechanism to each clients' weights to mitigate the privacy degradation caused by the high data dimensionality of deep learning models and many query iterations. 
Last, we evaluate \system\ on three commonly used datasets in previous works, MNIST \cite{lecun2010mnist}, Fashion-MNIST \cite{xiao2017fashion} and CIFAR-10 \cite{krizhevsky2009learning}. 
The proposed mechanism achieves a privacy budget of $\epsilon=1$ with 0.97\% accuracy loss on MNIST, a privacy budget of $\epsilon=4$ with 1.32\% accuracy loss on FMNIST, and a privacy budget of $\epsilon=10$ with 1.09\% accuracy loss on CIFAR-10, respectively.
Our work gives the first practical LDP solution for federated deep learning with a superior performance.
Due to space limitation, all proofs are in the appendix.

\section{Preliminary}

\noindent\textbf{Federated Learning.}
FL \cite{mcmahan2016communication,konevcny2016federated} has been proposed and widely used in different approaches.
The motivation is to share the model weights instead of the private data for better privacy protection. Each client, the owner of private training data, updates a model locally, and sends all gradients or weights information to the cloud. The cloud aggregates such information from clients, updates a new central model (e.g., averaging all clients' weights), and then distributes it back to a fraction of clients for another round of model update. Such process is continued iteratively until a satisfying performance is achieved. Note that, to minimize communication, each client might take several mini-batch gradient descent steps in local model computation.
%\cite{mcmahan2016communication} first proposes the federated learning framework that can mitigate the privacy leakage of model training with limited, unbalanced, massively, or even non-IID data among distributed devices, such as mobile phones.
%In each round of communication between cloud and clients, the cloud aggregates all gradients or weights information from each client , updates a new central model (e.g., averaging all clients' weights), and then distributes it to a fraction of clients. 
%Next, the optimized models are sent back to the cloud and repeat the first step to update the central model.
%Depending on the performance of the central model, the training phase is stopped, or a new communication round starts. 
%However, the communication between cloud and clients might be limited and vulnerable to interception \cite{geyer2017differentially}.

\noindent\textbf{Local Differential Privacy}
%In differential privacy, it is required to trust the party to collect the data and apply the necessary noise. However, such trust is not always feasible in many applications. Therefore, a recent concept of differential privacy was  tailored to the privacy-preserving data analysis. 
% It aims at providing provable privacy guarantee for each sensitive data sample, unlike general differential privacy is protecting the whole sensitive dataset \cite{Beimel2014}. 
To enhance the privacy protection, differential privacy (DP) has been applied to federated learning \cite{bhowmick2018protection,geyer2017differentially}. Traditional DP requires a central trusted party which is often not realistic. To remove that limitation, local differential privacy (LDP) has been proposed. 
The definition of $\epsilon$-LDP is given as below:

\begin{definition}~\cite{dwork2011differential}
  A randomized mechanism $\mathcal{M}$ satisfies $\epsilon$-LDP, for any pair input $x$ and $x'$ in $D$, and any output $Y$ of $\mathcal{M}$,
  \begin{equation}
    \Pr [\mathcal{M} (x) = Y] \le {e^\epsilon } \cdot \Pr [\mathcal{M} (x') = Y].
  \end{equation}
\end{definition}
The privacy guarantee of mechanism $\mathcal{M}$ is controlled by privacy budget \cite{dwork2011differential}, denoted as $\epsilon$.
A smaller value of $\epsilon$ indicates a stronger privacy guarantee. 

\section{Overview of \system}

In this section, we introduce \system, a federated learning framework with LDP as shown in Fig.\ref{fig:framework}. It consists of two steps, as described by Algorithm \ref{alg:fedlocal}.

\noindent\textbf{Cloud Update.} 
First, the cloud initializes the weights randomly at the beginning.
Let $n$ be the total number of local clients.
Then, in the $r$-th communication round, the cloud will randomly select $k_r \leq n$ clients to update their weights for local-side optimization.
Unlike the previous works \cite{bhowmick2018protection,erlingsson2014rappor,seif2020wireless}, where they assume that the aggregator already knows the identities (e.g., IP addresses) of the users but not their private data, our approach assumes the client remains anonymous to the cloud. 

%We will show that this approach can provide us more robust privacy bound and practical solution.

%\noindent\textbf{Local Update.}
%For each client, it contains its own private dataset. In each communication, the selected local clients will update their local model by the weight from the cloud. Next, the local model uses Stochastic Gradient Descent (SGD) to optimize the distinct local models' weights in parallel. In order to provide a practical privacy protection approach, we split and shuffle each local model's weights and send each weight through an anonymous mechanism to the cloud. For splitting, we first separate all parameters of the trained model, and tag them with the location id. Then, we random generate a time $t$ in Algorithm \ref{alg:splitshuffling} to prevent the cloud to track the client owner of each parameter. For shuffling, we can use fake IP for each query from client to cloud, but there are also other practical solutions are well discussed in the split and shuffling section below. In this case, we can provide more reliable privacy protection and give a practical solution with available results in the final.

\noindent\textbf{Local Update.}
For each client, it contains its own private dataset.
In each communication, the selected local clients will update their local models by the weight from the cloud.
Next, each local model uses Stochastic Gradient Descent (SGD) \cite{RobbMonr51} to optimize the distinct local models' weights in parallel.
In order to provide a practical privacy protection approach,
each local client applies a split and shuffle mechanism on the weights of local model and sends each weight through an anonymous mechanism to the cloud.
For splitting, we separate all parameters of the trained model, and tag them with their location in the network structure.
For shuffling, we randomly generate a time $t$ in Algorithm \ref{alg:splitshuffling} to prevent the cloud from tracking the client owner of each parameter.
Detailed description and discussion on parameter shuffling mechanism are given in the later section.

\subsection{Privacy-Preserving Mechanism} \label{sec:privacyprotection}

In this section, we describe in details the two major components in \system\ for privacy-preservation.

% \noindent \textbf{Differential Privacy}
% The second case is clients do not trust cloud and other clients, so they need to add DP locally, and the cloud will aggregate the noise gradient of each client. However, \textit{this is not LDP protection}, just apply differential privacy on local side, and we will discuss the difference later. 
% Then we can enforce a certain sensitivity by using scaled versions instead of the true updates: $\Delta \hat{w}^k = \Delta w^k / \max(1, \frac{\|\Delta w^k \|_2 }{S})$.
% Scaling ensures that the second norm is limited $\forall k, \|\Delta w^k \| < S$.
% The sensitivity of the scaled updates with respect to the summing operation is thus upper
% bounded by S.
% Different with last case, now each client adds noise (scaled to sensitivity S).
% Then, the cloud side to average client's output by $m_t$ yields an approximation,
% while preventing leakage of crucial information about an individual of each client.

% A new central model $w_{t+1}$ is allocated by adding this approximation to the current central model $w_t$.

% \[
%     \Delta \hat{w}^k = \underbrace{\overbrace{\Delta w^k / \max(1, \frac{\|\Delta w^k \|_2 }{S})}^{\text{An update clipped at S}} + \overbrace{\mathcal{N}(0,\delta^2 S^2)}^{\text{Noise scaled to S}}}_{\text{Gaussian mechanism approximating sum of update}}
% \]

% \[
%     w_{t+1} = w_t + \frac{1}{m} \underbrace{\sum_{k=0}^{m_t}\Delta \hat{w}^k}_{\text{Sum of updates clipped at S}}
% \]

%\noindent \textbf{Data Perturbation.}
\subsubsection{Data Perturbation with Adaptive Range}

To better understand the necessity of adaptive range in data perturbation, we first generalize the LDP mechanism in~\cite{duchi2018minimax} by considering model weights' range.

Given the weights $W$ of a model, the algorithm returns a perturbed tuple $W^*$
by randomizing each dimension of $W$. Let $\cM$ be our mechanism, for each weight/entry $w\in W$, assuming $w\in [c-r, c+r]$ where $c$ is the center of $w$'s range and $r$ is the radius of the range ($c,r$ depend on how we clip the weight).
%As an alternative notation, if $w\in[a,b]$, then $c=(a+b)/2, r=(b-a)/2$.
%We perturb $w$ to $w^*$ with two intuitions. First, $w^*$ can only be one of two discrete values: one larger, the other smaller, which makes it easy to prove LDP. Then, to ensure the noise has zero mean, i.e., $\mathbb{E}(w^*)=w$, we set $w^*$ to the larger value with a high probability if $w$ itself is large, and set $w^*$ to the smaller value with a high probability if $w$ itself is small.
Therefore, we design the following LDP mechanism to perturb $w$.

\begin{equation}
\begin{aligned}
\label{eq:ldp}
    w^* &=\cM(w) \\
    &= 
\begin{cases}
    c + r\cdot \frac{e^\epsilon+1}{e^\epsilon-1}, \text{   w.p.   }\frac{(w-c)(e^\epsilon - 1)+r(e^\epsilon+1)}{2r(e^\epsilon+1)}\\ \\
    c - r\cdot \frac{e^\epsilon+1}{e^\epsilon-1},  \text{   w.p.   }\frac{-(w-c)(e^\epsilon - 1)+r(e^\epsilon+1)}{2r(e^\epsilon+1)}
\end{cases}
\end{aligned}
\end{equation}
where $w^*$ is the reported noisy weight by our proposed LDP, and ``w.p.'' stands for ``with probability''. 
Algorithm \ref{alg:noise} shows the pseudo-code of this mechanism. 

From~(\ref{eq:ldp}), we can see that, with a certain privacy budget $\epsilon$, the larger range which indicates the larger $r$, the larger range of perturbed weights $w^*$ which intuitively indicates a noisier perturbed weight. In Section~\ref{sec:privacy}, we give a formal privacy and utility analysis which shows that the larger $r$, the larger variance of the estimated model weights. This observation motivates us to consider the range variation of model weights when applying LDP. Therefore, we propose the adaptive range settings.
The proposed adaptive range setting in private federated learning consists of several steps:

\iffalse
\begin{equation}
\begin{aligned}
\label{eq:ldp}
    w^* &=\cM(w) \\
    &= 
\begin{cases}
     \frac{e^\epsilon+1}{e^\epsilon-1}, \text{   w.p.   }\frac{w(e^\epsilon - 1)+e^\epsilon+1}{2e^\epsilon+2}\\ \\
     \frac{e^\epsilon+1}{e^\epsilon-1},  \text{   w.p.   }\frac{-w(e^\epsilon - 1)+e^\epsilon+1}{2e^\epsilon+2}
\end{cases}
\end{aligned}
\end{equation}
\fi

%\vspace{10px}
%\begin{minipage}{0.46\textwidth}
\begin{algorithm}[t]
    \caption{{\system}
	} \label{alg:fedlocal}
    \KwIn{$n$ is the number of local clients; $B$ is the local mini-batch size, $E$ the number of local epochs, $\gamma$ is the learning rate.}
% 	\KwOut{}
	%\tcp{Phase 1: Cloud Gradient Optimization}
	\textbf{CloudUpdate}\

    Initialize weights $W_0=\{w^0_{id}\mid \forall id\}$, where $id$ indicates the position of each weight\;
    Initialize the range by ($C_0$, $R_0$) for layers of $W_0$\;
    $SendToClient(W_{0}, k_0, C_0, R_0)$\;
	\For{each round $l = 1, 2, \ldots$ \label{line:cfor1}}{
		randomly select $k_l (k_l \leq n)$ local clients\;
		collect all weights updates $\{(id, w_{id})\mid \forall id\}$ from selected clients with $SendToCloud$\;
		\%Now calculate and update model weights $W_l$\;
	    %}
	    \For{each element $w \in \cW_l$ \label{line:cfor2}}{
	    determine the $id$ for $w$\;
	    $w \leftarrow \frac{1}{k_l} \sum w_{id}$; \% Compute mean of local models\;
	    }
	    \For{each layer of $\cW_l$ \label{line:cfor3}}{
	    update $C$ and $R$ from $W_l$\;
	    }
	    %$(W_l, C_l, R_l) \leftarrow$ Separate weights $W_l$, range parameters $C_l$ and %$R_l$ from $D_l$ by $id$; 
	    $SendToClient(W_{l}, k_l, C_l, R_l)$ \% Update clients\; 
	}
	%\bf{return} $w_{t+1}$.
	
	%\tcp{Phase 2: Local Gradient Optimization}
	
	\textbf{LocalUpdate($W_{l}, k_l, C_l, R_l$)}\
	
	Receive weights $W_{l}, k_l, C_l, R_l$ from cloud by $SendToClient$\;
	\For{each local client $s \in k_l$ in parallel \label{line:cfor4}}{
	    $W^s_{l+1} \leftarrow W_l$\;
	    \For{each local epoch $i = 1, 2, ...E$ \label{line:cfor5}}{
	        \For{each batch $b \in B$ \label{line:cfor6}}{
                $W^s_{l+1} \leftarrow W^s_{l+1} - \gamma \bigtriangledown L(W^s_{l+1}; b)$\;
            }
        }
        %$(C_{l+1}^W, R_{l+1}^W) \leftarrow$ Compute weight range of each layer locally from $W^s_{l+1}$\;
        %$D_{l+1} \leftarrow W^s_{l+1} \cup C_{l+1} \cup R_{l+1}$; \% Combine weights and range parameters for data perturbation and parameter shuffling\;
        DataPertubation($W^s_{l+1}, C_{l}, R_{l}$)\;
        ParameterShuffling($W^s_{l+1})$\;
	    %After perturb all $w \in W^s_{l+1}$, we have $W^s_{l+1} \rightarrow {W^s_{l+1}}^*$\;
        %${W^c_l}^* \leftarrow DataPertubation(W^c_l)$\;
	    %$\mathbf{W} \leftarrow \mathbf{W} \cup {w_c^*}$ \;
	    %${\cW^s_{l+1}}^* \leftarrow SplitShuffle({W^s_{l+1}}^*)$\;
%	    \tcp{Send the pair information back to cloud by time $t$}
	    %$SendToCloud(id, w^*, t)$ by ascending time $t$\;
	    %$ \forall (id, w, t) \in {\cW^s*_{l+1}}$\;
	}
	%$\mathbf{W}^* \leftarrow $ Apply parameter shuffling on $\mathbf{W}$\;
	% \bf{return} $\mathbf{W}^*$.
\end{algorithm}
%\end{minipage}
%\hfill
%\begin{minipage}{0.46\textwidth}
\begin{algorithm}[] 
    \caption{DataPertubation	
	} \label{alg:noise}
	\KwIn{Original local weights $W^s_{l+1}$, range represented by $C_l$ and $R_l$, privacy budget $\epsilon$}
	\KwOut{Perturbed weights $W^s_{l+1}$}
    %Prefix the range of the $W$ and then we receive $c$ and $r$\;
	
	\For{each $w \in W^s_{l+1}$ and corresponding $c \in C_{l}$ and $r \in R_{l}$ \label{line:cfor7}} {
        
		Sample a Bernoulli variable $u$ such that\\
		$\Pr[u=1]=  \frac{(w-c)(e^\epsilon - 1)+r(e^\epsilon+1)}{2r(e^\epsilon+1)}$\;
		\If{u==1}{
		    $w^* = c + r\cdot \frac{e^\epsilon+1}{e^\epsilon-1}$\;
		}
		\Else{
		    $w^* = c - r\cdot \frac{e^\epsilon+1}{e^\epsilon-1}$\;
		}
		%replace $w$ with $w*$\;
	    $w \leftarrow w^*$; \% update $W^s_{l+1}$\; 
	    }
	\bf{return}.
\end{algorithm}
%\vspace{15px}
\begin{algorithm}[t]
    \caption{ParameterShuffling	
	} \label{alg:splitshuffling}
	\KwIn{Perturbed weights $W^s_{l+1}$ 
	after Algorithm \ref{alg:noise}}
% 	\KwOut{$\cW^*$ after shuffling approach}
%    	Initialize a weight information set $W^*$\;
    label the position $id$ of each element of $W$\;
	%\For{each weight $w \in W$ and its position id $id$ \label{line:cfor7}}{
	\For{each element $w^s \in W$} {
	    label the element position with a unique $id$
	    $t^s_{id} \leftarrow U(0, T)$ \% Randomly sample a small latency between 0 and $T$;
	}
    $SendToCloud(id, w_{id})$ at time $t^s_{id}$\;
	\bf{return}.
\end{algorithm}
%\end{minipage}
%\vspace{10px}

%=====================================================
\begin{figure*}[tb]
\centering
\includegraphics[width=5in]{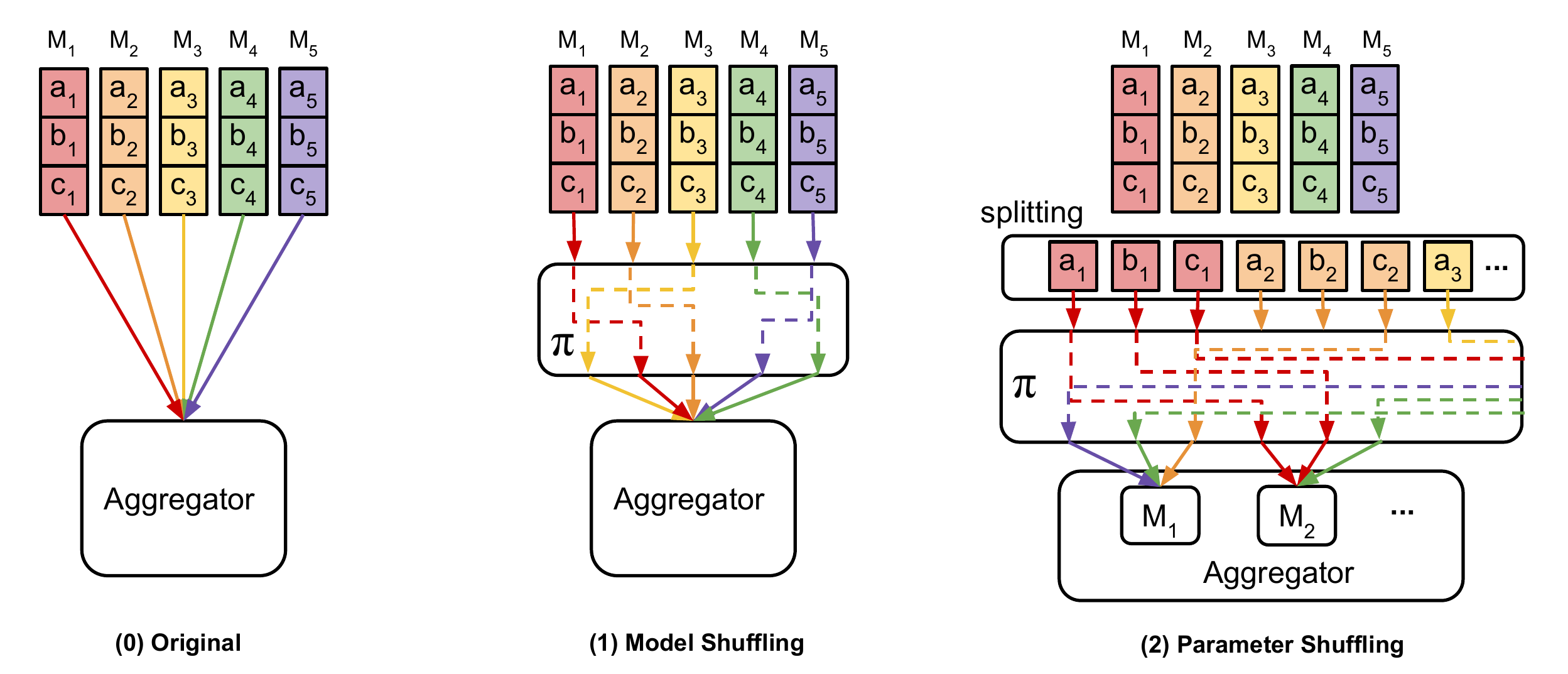}
\caption{Example of parameter shuffling in \system. $\pi$ is a random shuffling mechanism. Note that, the parameter shuffling is composed of parameter shuffling mechanisms. The cloud will aggregate all kinds of information from clients.} \label{fig:shuffling}
\vspace{-5pt}
\end{figure*}
\begin{itemize}
    \item All clients and the server agree to the same weight range, represented by $(C_0, R_0)$ based on prior knowledge at the beginning when initializing weights.
    \item In local update, based on the weight range $(c_l, r_l)$ of each layer where $c_l \in C$ and $r_l \in R$, each local client can perturb its weights by the proposed LDP data perturbation given by~(\ref{eq:ldp}).
    \item In cloud update, the cloud calculates and updates $C$ and $R$ from the local weights update received from clients. The updated $C$ and $R$ will be distributed to the clients when the cloud distributes the updated weights.
    %\item At the same time, each local client can also perturb its current local weight range of each layer by the weight range $(c_l, r_l)$ updated from last round. We could consider the weight range of each layer as additional weights needed to update by the server.
    %\item Besides the LDP aggregation for model updates, the server will also compute the mean of all $(c, r)$ pairs overall clients and then update the new weight range. Then, the server sends both the new model and the new weight range to all clients.
    %\item The client updates its local weight, and also update the new weight range for the LDP perturbation.
\end{itemize}
Note that range parameters $C$ and $R$ are calculated layer-wise, i.e., they are vectors with each element corresponding to one layer in the model. A pseudo-code of adaptive range settings is given in Algorithm~\ref{alg:fedlocal}.
%We have not seen any other adaptive solution to set up the range of the weights for LDP perturbation based on our knowledge. 
Our adaptive range setting gives a new feasible approach to set up the weight range and improves the utility for complex deep learning tasks.
% Note that, the adaptive range setting would not leak additional private information since previous works also requires communication for range setup at the begining.
%% 
%so the communication would not become a new problem for adaptive setting in general.

% \begin{algorithm}[t] 
% 	\caption{DataPertubation(w)	
% 	} \label{alg:noise}
% % 	\KwIn{Graph $G=(V,E,w)$, Seed Set $S$, budget $q$, accuracy weights $(\epsilon, \ell)$}
% % 	\KwOut{Edge $F$}
% 	
%     Prefix the range of the $w$ and then we have fixed $c$ and $r$\;
% 	
% 	\For{each weight $p \in w$ \label{line:cfor6}}{
% 		Sample a Bernoulli variable $u$ such that\\
% 		$\Pr[u=1]=  \frac{(p-c)(e^\epsilon - 1)+r(e^\epsilon+1)}{2r(e^\epsilon+1)}$\;
% 		\If{u==1}{
% 		    $p^* = c + r\cdot \frac{e^\epsilon+1}{e^\epsilon-1}$\;
% 		}
% 		\Else{
% 		    $p^* = c - r\cdot \frac{e^\epsilon+1}{e^\epsilon-1}$\;
% 		}
% 		replace $p$ as $p*$\;
% 	}
% 	After perturb all $p \in w$, then we have $w \rightarrow w^*$\;
% 	\bf{return} $w^*$.
% \end{algorithm}
% 
% \begin{algorithm}[t] 
% 	%\SetKwComment{tcp}{\tiny // }{}%
% 	\caption{SplittingShuffling(w)	
% 	} \label{alg:splitshuffling}
% 	
% 	Initialize a weight information set $w^*$\;
% 	\For{each weight and its id $(p, id) \in w$ \label{line:cfor7}}{
% 	    $t \leftarrow$ Random Sample a Time\;
% 	    $w^* \leftarrow w^* \cup (p, id, t)$\;
% 	}
% 	\bf{return} $w^*$.
% \end{algorithm}
\subsubsection{Parameter Shuffling}
%\noindent \textbf{Splitting \& Shuffling.}
As noted in \cite{dwork2014algorithmic}, a sophisticated privacy-preserving system composites multiple locally differentiate algorithms that lead to a composition of privacy cost of such algorithms. That is, a combination of locally differentiate algorithm with privacy budget $\epsilon_1$ and $\epsilon_2$ consumes a privacy budget of $\epsilon_1 +\epsilon_2$. 
Training a DNN using federated learning requires clients to upload gradient updates to the cloud in multiple iterations. If LDP is applied at each iteration along the iterative training process, the privacy budget will accumulate, which causes the explosion of the total privacy budget. 
%In order to control the total privacy budget, a larger noise has to be added to the algorithm, which inevitably hurts the model's accuracy for the complex learning task.
%Bittau et al. \cite{bittau2017} leveraged trustworthy shufflers to anonymize data reports from users in order to enhance privacy protection. This idea was further formalized by Erlingsson et al. \cite{erlingsson2019amplification}, and it was shown that when data reports are properly anonymized at each timestep and unlinkable over the time, the overall privacy guarantee can be greatly improved. Inspired by their work, we add anonymity onto the local model updates in \system\ to break the linkage from the received data at the cloud to a specific client and to decouple the gradient updates sent from the same client in each iteration. 

Previous efforts \cite{bittau2017,erlingsson2019amplification} have shown that when data reports are properly anonymized at each timestep and unlinkable over the time, the overall privacy guarantee can be greatly improved. That is, by adding anonymity onto the local model updates in \system, it can break the linkage from the received data at the cloud to a specific client and decouple the gradient updates sent from the same client in each iteration.
%To make cloud unable to \textit{directly} link the received data with a specific client, 
This can be achieved by multiple existing mechanisms, depending on how the server tracks clients in the specific scenario. As a typically best practice for privacy, a certain level of anonymity of each client to the server is applied to disassociate clients' individually-identifiable information from their weight updates. For example, if the server tracks clients by IP address, each client can adopt an untraceable IP address by using a network proxy, a VPN service \cite{OperaVPN}, public WiFi access, and/or the Tor \cite{Dingledine04tor}. For another example, if the server tracks clients by software-generated metadata such as an ID,  each client can randomize this metadata before sending it to the server.
%More advanced anonymization techniques have also been proposed such as using secure multiparty computation \cite{Bonawitz2017}.

%Due to the high data dimension and multi round interaction between cloud %and clients for model training, the traditional approach will cause the %explosion of the privacy cost.
%To tackle the issues, 
%we assume the clients utilize existing anonymization techniques to disassociate their personally-identifiable information from their uploaded data. client anonymity can be achieved via faking source IP, VPN, proxy, Tor, secure multi-party computation, etc~\cite{erlingsson2019amplification}, which is a typical approach to enhance privacy.
However, we argue that client anonymity is not sufficient to prevent side-channel linkage attacks, e.g., the cloud can still link a large number of weight updates together if the client uploaded them at the same time in each iteration.
Therefore, we design the local parameter shuffling mechanism to break the linkage among the model weight updates from the same clients and to mix them among updates from other clients, making it harder for the cloud to combine more than one piece of updates to infer more information about any client.

Fig.~\ref{fig:shuffling} illustrates the idea of parameter shuffling of the weights of each local client model. For local model $M_1, M_2, M_3, M_4, M_5$, each model has the same structure but with different weight values. In comparison, original federated learning sends the models' information to the cloud, as shown in Fig.~\ref{fig:shuffling} (0), and \cite{balle2019privacy,cheu2019distributed,erlingsson2019amplification} uses the model shuffling to anonymize the communications between cloud and client, as shown in Fig.~\ref{fig:shuffling} (1). However, prior works doesn't consider the privacy issue due to the high dimensionality of the DNN weights. 
To solve the curse of the high dimensionality of DNN models, parameter shuffling is executed in two steps. In the first step, each client splits the weights of their local model, but labels each weight with an id to indicate its location of the weight in the network structure. In the second step, each client samples a small random latency $t$ from a uniform distribution $U(0, T)$, where $T>0$, for each weight and waits for $t$ before sending the weight to the cloud. Note that $T$ is agreed by all clients at the beginning of federated learning, e.g., by broadcasting their proposed $T$ and computing the median of them (not computing average because average is vulnerable to a single dishonest client's influence). Since all uploads happen uniformly randomly during the same period of time, the cloud cannot distinguish them by upload time and cannot associate the weights updates from the same client. An approach is given in Algorithm~\ref{alg:splitshuffling}.
%The max delay $T$ affects how long it takes to finish one iteration and thus the overall training time.

%Note that in the above design of the random delay for each client, we do not consider the difference in clients' local training time due to different hardwares, and the difference in clients' communication time to the cloud due to different communication environment. In practical scenarios, such a heterogeneity of client response time is very likely. However, we argue that the above design can accommodate any heterogeneity with a minor adjustment. 
%
In practical scenarios, the clients might spend different time in training (due to different hardwares) and in communications (due to different network conditions). The above design of the random delay for each client can accommodate such a heterogeneity with a minor adjustment.
For client $i$, we denote the local computing time, the communication time, the proposed random delay for $j$-th model parameter by $t_{LC}^i$, $t^i_C$ and $t_j^i$, respectively. Define $T_S = \max_i\{t_{LC}^i + t_{C}^i\}$, which represents the response time of the slowest client without parameter shuffling. At the beginning of each iteration, $T_S$ can be estimated from clients' predictions of the corresponding $t_{LC}^i$ and $t_{C}^i$ based on their hardware specifications and communication settings.
Instead of having $t^i_{j} \sim U(0,T)$ as originally proposed in the paper, we set $t^i_{j} = T_W^i + \tilde{t}^i_j$, where $T_W^i = T_S - t_{LC}^i - t_{C}^i$ and $\tilde{t}^i_j\sim U(0,T)$. 
That is, when sending any parameter, client $i$ waits $T_W^i$ before adding a random delay uniformly distributed from 0 to $T$.
The resulting response time for $j$-th model parameter by client $i$, defined as $t_{LC}^i + t^i_{j} + t_{C}^i$, can be then rewritten as $T_S + \tilde{t}^i_j$. 
%Equivalently, the response time for client $i$ is then $T_S + t_{i}'$, where $t_i'\sim U(0,T)$. 
This shows that adding $T_W^i$ has no negative effect and the additional delay overhead is still controlled by $T$, as the delay of each iteration is decided by the slowest client (at least $T_S$). The response times of all parameter updates are still random and uniformly distributed. Therefore, the client anonymity is preserved, and the privacy budget won't accumulate. Besides, additional methods, such as not selecting the slower device (as the reviewer also mentioned), optimizing the local computing or applying asynchronous FL, can be applied regardlessly.

\iffalse
$T$ could be as small as clients and cloud can support, but with a safe margin by considering the imperfectness of time synchronization, the estimation of $T_S$, and the unavoidable randomness in local computing (for example, device sluggish) and communications (for example, packet drop), so that the parameter updates reaches the cloud randomly. In any case, $T$ is expected to have a small impact on the time cost of each FL iteration.
\fi

In theory, $T$ can be as small as clients and cloud can support, to add enough randomness to the parameter shuffling. In practice, $T$ might be given a safe margin by considering the imperfectness of time synchronization and the estimation of $T_S$, and the unavoidable randomness in local computing (for example, device sluggish) and communications (for example, packet drop), so that the parameter updates reaches the cloud randomly. In any case, $T$ is expected to be a small impact on the time cost of each FL iteration. 

\section{Privacy and Utility Analysis} \label{sec:privacy}
In this section, we analyze the privacy guarantee and utility cost of our approach, and compare it with other LDP mechanisms. Due to limited space, all proofs of lemmas and theorems are given in the appendix.

\noindent\textbf{Local Differential Privacy}
We prove the proposed random noise mechanism $\cM$ satisfies LDP w.r.t. a bounding range. 

\begin{restatable}{theorem}{epsproof}
Given any single number $w \in [c-r, c+r]$, where $c$ is the center of $w$'s range and $r$ is the radius of the range, the proposed mechanism $\cM$ in Eq.~\ref{eq:ldp} satisfies $\epsilon$-LDP w.r.t. $[c-r, c+r]$.
\end{restatable}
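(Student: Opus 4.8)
The plan is to exploit the fact that $\cM$ is supported on only two points, so the $\epsilon$-LDP inequality collapses to bounding two scalar probability ratios. First I would fix notation: write $y_+ = c + r\cdot\frac{e^\epsilon+1}{e^\epsilon-1}$ and $y_- = c - r\cdot\frac{e^\epsilon+1}{e^\epsilon-1}$ for the two atoms of the output distribution, and set $p_+(w) = \Pr[\cM(w)=y_+]$ and $p_-(w) = \Pr[\cM(w)=y_-]$ as defined in Eq.~\ref{eq:ldp}. A preliminary sanity check is that $p_+(w)+p_-(w)=1$, since the two numerators sum to $2r(e^\epsilon+1)$; hence $\cM$ is a well-defined randomized mechanism. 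Note also that for $\epsilon>0$ we have $e^\epsilon-1>0$, so the atoms $y_\pm$ are finite and the affine coefficient of $w$ in $p_+$ has a definite sign.

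The core step is to compute the exact range of $p_+(w)$ over the admissible interval $w\in[c-r,c+r]$. Since $p_+(w) = \frac{(w-c)(e^\epsilon-1)+r(e^\epsilon+1)}{2r(e^\epsilon+1)}$ is affine and increasing in $w$, its extrema occur at the endpoints: at $w=c-r$ the numerator equals $-r(e^\epsilon-1)+r(e^\epsilon+1)=2r$, giving $p_+ = \frac{1}{e^\epsilon+1}$, and at $w=c+r$ the numerator equals $r(e^\epsilon-1)+r(e^\epsilon+1)=2re^\epsilon$, giving $p_+ = \frac{e^\epsilon}{e^\epsilon+1}$. Therefore
\[
  \frac{1}{e^\epsilon+1} \;\le\; p_+(w) \;\le\; \frac{e^\epsilon}{e^\epsilon+1}
  \qquad\text{for all } w\in[c-r,c+r],
\]
which in particular confirms $p_+(w)\in[0,1]$. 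By the symmetry $w-c\mapsto-(w-c)$ the identical bounds hold for $p_-(w)$.

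Finally, for any two inputs $w,w'\in[c-r,c+r]$ and any output $Y\in\{y_+,y_-\}$, the displayed bounds give
\[
  \frac{\Pr[\cM(w)=Y]}{\Pr[\cM(w')=Y]}
  \;\le\; \frac{e^\epsilon/(e^\epsilon+1)}{1/(e^\epsilon+1)}
  \;=\; e^\epsilon,
\]
which is exactly the $\epsilon$-LDP condition of Definition~1. I do not expect any deep obstacle here: the only points needing care are the bookkeeping around $e^\epsilon-1>0$ (so that the atoms are well-defined and $p_+$ is genuinely monotone), and the observation that affinity in $w$ makes the two endpoint values the true minimum and maximum rather than mere candidates. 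The real content is the modeling insight that a two-atom output space reduces the quantifier-over-all-outputs in the LDP definition to a single worst-case ratio, after which the claim is a short direct computation.
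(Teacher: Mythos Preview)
Your proposal is correct and follows essentially the same approach as the paper: both bound the LDP ratio by $\max_w \Pr[\cM(w)=Y]/\min_{w'} \Pr[\cM(w')=Y]$, evaluate the affine-in-$w$ probabilities at the endpoints $c\pm r$ to obtain the extremal values $e^\epsilon/(e^\epsilon+1)$ and $1/(e^\epsilon+1)$, and conclude the ratio is $e^\epsilon$. Your version simply adds more explicit bookkeeping (well-definedness of the mechanism, monotonicity, the $\epsilon>0$ check) around the same core computation.
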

Both $\epsilon$, $r$ affect the privacy level. $\epsilon$ determines how well the data is hidden in a ``crowd'' and $r$ determines the size of the ``crowd''.
%\subsubsection{Bias and Variance}
Suppose the true average weights in an iteration is $\Bar{w}=\dfrac{1}{n}\sum_u w_u$ for each $w \in W$, the proposed LDP mechanism in Eq.~\ref{eq:ldp} induces zero bias in calculating the average weight $\overline{\mathcal{M}(w)}=\dfrac{1}{n}\sum_u \mathcal{M}(w_u)$.
\begin{restatable}{lemma}{noiseproof}
\label{th:bias}
Algorithm~\ref{alg:noise} introduces zero bias to estimating average weights, i.e., $\mathbb{E}[\overline{\mathcal{M}(w)}]=\Bar{w}$.
\end{restatable}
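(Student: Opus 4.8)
The plan is to prove the single-weight identity $\mathbb{E}[\mathcal{M}(w)]=w$ first, and then lift it to the average by linearity of expectation. First I would fix one weight $w\in[c-r,c+r]$ and set $K=\frac{e^\epsilon+1}{e^\epsilon-1}$, so that according to Eq.~\ref{eq:ldp} the perturbed value $\mathcal{M}(w)$ equals $c+rK$ with probability $p=\frac{(w-c)(e^\epsilon-1)+r(e^\epsilon+1)}{2r(e^\epsilon+1)}$ and equals $c-rK$ with probability $1-p$. The expectation then collapses to $\mathbb{E}[\mathcal{M}(w)] = p(c+rK)+(1-p)(c-rK) = c + rK(2p-1)$, so the whole computation reduces to evaluating $2p-1$.

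The key (and essentially only) step is the cancellation: keeping $c$, $r$, and $K$ symbolic rather than expanding $K$, one gets $2p-1 = \frac{(w-c)(e^\epsilon-1)+r(e^\epsilon+1)}{r(e^\epsilon+1)}-1 = \frac{(w-c)(e^\epsilon-1)}{r(e^\epsilon+1)} = \frac{w-c}{rK}$, whence $rK(2p-1)=w-c$ and therefore $\mathbb{E}[\mathcal{M}(w)] = c+(w-c) = w$, which is exactly the zero-mean property the mechanism was designed for. Before this I would record that $p$ is a legitimate probability: since $w-c\in[-r,r]$ and $e^\epsilon-1>0$, the numerator of $p$ ranges over $[2r,2re^\epsilon]$, so $p\in[\frac{1}{e^\epsilon+1},\frac{e^\epsilon}{e^\epsilon+1}]\subseteq[0,1]$ and likewise $1-p\in[0,1]$; this sanity check also confirms that Algorithm~\ref{alg:noise} is well-defined.

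Finally I would lift this to the average. For a fixed coordinate $id$, the estimator is $\overline{\mathcal{M}(w)}=\frac{1}{n}\sum_u\mathcal{M}(w_u)$, and by linearity of expectation $\mathbb{E}[\overline{\mathcal{M}(w)}]=\frac{1}{n}\sum_u\mathbb{E}[\mathcal{M}(w_u)]=\frac{1}{n}\sum_u w_u=\bar w$; no independence across clients is needed here, only linearity. There is no genuine obstacle in this lemma — its entire content is the two-line algebraic simplification of $2p-1$ above — so the only thing to be careful about is not substituting the explicit form of $K$ too early, which would obscure why the $rK$ factors cancel.
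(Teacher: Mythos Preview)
Your proposal is correct and follows essentially the same approach as the paper: compute $\mathbb{E}[\mathcal{M}(w_u)]=w_u$ directly from the two-point distribution in Eq.~\ref{eq:ldp}, then apply linearity of expectation to the average. The only differences are cosmetic---you organize the algebra via $2p-1$ and add a sanity check that $p\in[0,1]$, whereas the paper expands the two weighted terms directly---but the argument is the same.
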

%Lemma~\ref{th:bias} shows that if we have infinite number of clients, the cloud can get an zero-bias central model by averaging the perturbed weights.
%Lemma~\ref{th:bias} indicates that if we have more client, the average of the perturbed weights is closer to the true average.

The following lemmas show the variance of the data obtained by the cloud.
First, the proposed mechanism leads to a variance to each reported weight $\mathcal{M}(w)$. 
\begin{restatable}{lemma}{varcmpproof}
\label{lemma:cmp}
Let $\cM$ be the proposed data perturbation mechanism. Given any number $w$, the variance of the mechanism is
$Var[\mathcal{M}(w)] =r^2\left(\dfrac{e^\epsilon+1}{e^\epsilon-1}\right)^2.$
\end{restatable}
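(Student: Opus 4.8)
The plan is to exploit the fact that $\mathcal{M}(w)$ is a two–point random variable: it takes the value $v_{+}=c+r\frac{e^{\epsilon}+1}{e^{\epsilon}-1}$ with probability $p_{+}=\frac{(w-c)(e^{\epsilon}-1)+r(e^{\epsilon}+1)}{2r(e^{\epsilon}+1)}$ and the value $v_{-}=c-r\frac{e^{\epsilon}+1}{e^{\epsilon}-1}$ with probability $1-p_{+}$. Writing $A=r\frac{e^{\epsilon}+1}{e^{\epsilon}-1}$ so that $v_{\pm}=c\pm A$, the whole computation collapses to the variance of a shifted two–point Bernoulli. First I would record the identity $p_{+}-(1-p_{+})=\frac{(w-c)(e^{\epsilon}-1)}{r(e^{\epsilon}+1)}=\frac{w-c}{A}$, which is exactly the statement $\mathbb{E}[\mathcal{M}(w)]=c+A\cdot\frac{w-c}{A}=w$, i.e. Lemma~\ref{th:bias} specialised to a single weight; so the mean may be treated as known.

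Second, I would compute the second moment straight from the two–point law,
\[
\mathbb{E}[\mathcal{M}(w)^{2}]=(c+A)^{2}p_{+}+(c-A)^{2}(1-p_{+})=c^{2}+A^{2}+2cA(2p_{+}-1)=c^{2}+A^{2}+2c(w-c),
\]
using $2p_{+}-1=(w-c)/A$ from the first step. Subtracting $(\mathbb{E}[\mathcal{M}(w)])^{2}=w^{2}$ then yields the clean closed form
\[
Var[\mathcal{M}(w)]=A^{2}-(w-c)^{2}=r^{2}\Big(\tfrac{e^{\epsilon}+1}{e^{\epsilon}-1}\Big)^{2}-(w-c)^{2}.
\]
Equivalently one can arrive here via $Var[\mathcal{M}(w)]=(2A)^{2}p_{+}(1-p_{+})$ together with $p_{+}(1-p_{+})=\tfrac14\big(1-(w-c)^{2}/A^{2}\big)$; I would present whichever is shorter.

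Third, I would convert the closed form into the claimed order. Since $w\in[c-r,c+r]$ we have $0\le (w-c)^{2}\le r^{2}$, and the elementary identity
\[
\Big(\tfrac{e^{\epsilon}+1}{e^{\epsilon}-1}\Big)^{2}-1=\frac{4e^{\epsilon}}{(e^{\epsilon}-1)^{2}}=\frac{4e^{-\epsilon}}{(1-e^{-\epsilon})^{2}}=4e^{-\epsilon}+8e^{-2\epsilon}+O(e^{-3\epsilon})
\]
shows $r^{2}\big(\tfrac{e^{\epsilon}+1}{e^{\epsilon}-1}\big)^{2}-r^{2}=r^{2}\,O(e^{-\epsilon}+e^{-2\epsilon})$. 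Combining this with the bound $(w-c)^{2}\le r^{2}$ — which is saturated exactly when $w$ sits at the edge of its clipped range — gives $Var[\mathcal{M}(w)]=r^{2}\,O(e^{-\epsilon}+e^{-2\epsilon})$.

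The two moment computations are routine algebra on a Bernoulli variable; the only step that needs care is the last one, namely massaging $\big(\tfrac{e^{\epsilon}+1}{e^{\epsilon}-1}\big)^{2}$ via the geometric expansion $\tfrac{1}{e^{\epsilon}-1}=e^{-\epsilon}+e^{-2\epsilon}+\cdots$ and checking that the residual $r^{2}$–order term is accounted for by $(w-c)^{2}$ (for $w$ away from the range boundary the variance is in fact smaller, up to the common $r^{2}$–scale term shared by all such discrete LDP mechanisms). I expect this bookkeeping around the $O(\cdot)$ to be essentially the only obstacle; everything else is immediate once the mechanism is recognised as a shifted $\pm A$ Bernoulli with mean $w$.
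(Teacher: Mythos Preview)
Your computation of the exact variance $Var[\mathcal{M}(w)]=A^{2}-(w-c)^{2}=r^{2}\big(\tfrac{e^{\epsilon}+1}{e^{\epsilon}-1}\big)^{2}-(w-c)^{2}$ is correct and is precisely what the paper's proof records (the paper writes $\mathbb{E}[\mathcal{M}^2(w_u)]-\mathbb{E}^2[\mathcal{M}(w_u)]$ and then simply bounds it above by $r^{2}\big(\tfrac{e^{\epsilon}+1}{e^{\epsilon}-1}\big)^{2}$). So the core argument matches.

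The gap is in your final step converting to the asserted order. You invoke $(w-c)^{2}\le r^{2}$, but this inequality points the wrong way: since $Var=A^{2}-(w-c)^{2}$, using $(w-c)^{2}\le r^{2}$ yields the \emph{lower} bound $Var\ge A^{2}-r^{2}=r^{2}\cdot\tfrac{4e^{\epsilon}}{(e^{\epsilon}-1)^{2}}$, not an upper bound. Your parenthetical that ``for $w$ away from the range boundary the variance is in fact smaller'' is also reversed: at $w=c$ one has $(w-c)^{2}=0$ and the variance attains its maximum $A^{2}=r^{2}\big(\tfrac{e^{\epsilon}+1}{e^{\epsilon}-1}\big)^{2}\to r^{2}$ as $\epsilon\to\infty$, which is $\Theta(r^{2})$ and not $r^{2}O(e^{-\epsilon}+e^{-2\epsilon})$. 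In fairness, the paper's own proof never attempts to justify the $O(e^{-\epsilon}+e^{-2\epsilon})$ form either; it stops at the bound $Var\le r^{2}\big(\tfrac{e^{\epsilon}+1}{e^{\epsilon}-1}\big)^{2}$. So the big-$O$ in the lemma statement is only literally correct at the clipped endpoints $w=c\pm r$, and what your closed form (and the paper's) actually supports for general $w$ is $Var[\mathcal{M}(w)]\le r^{2}\big(\tfrac{e^{\epsilon}+1}{e^{\epsilon}-1}\big)^{2}=r^{2}\big(1+O(e^{-\epsilon})\big)$.
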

Then, for the variance of the estimated average weight $\overline{\mathcal{M}(w)}$,
we have the lower bound and upper bound as below.
\begin{restatable}{lemma}{varianceproof}
\label{lemma:var}
Let $\overline{\mathcal{M}(w)}$ be the estimated average weight. The lower bound and upper bound of the estimated average weight is:
$\dfrac{r^2(e^\epsilon+1)^2}{n(e^\epsilon-1)^2}-\dfrac{r^2}{n} \le Var[\overline{\mathcal{M}(w)}] \le \dfrac{r^2(e^\epsilon+1)^2}{n(e^\epsilon-1)^2}.$
\end{restatable}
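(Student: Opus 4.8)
The plan is to reduce the problem to a single-weight variance computation and then optimize over the admissible locations of the true weights. Because each client runs \textbf{DataPerturbation} locally and independently, the terms $\cM(w_u)$ appearing in $\overline{\cM(w)}=\frac1n\sum_{u=1}^n \cM(w_u)$ are mutually independent, so $\mathrm{Var}[\overline{\cM(w)}]=\frac{1}{n^2}\sum_{u=1}^n \mathrm{Var}[\cM(w_u)]$. It therefore suffices to pin down $\mathrm{Var}[\cM(w)]$ in \emph{closed form} (Lemma~\ref{lemma:cmp} only records its order of magnitude, which is not enough since the target bounds are exact) and then bound the resulting sum.

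For the single-weight variance the key observation is that $\cM(w)$ is supported on just the two points $c\pm r\frac{e^\epsilon+1}{e^\epsilon-1}$, so with $A:=r\frac{e^\epsilon+1}{e^\epsilon-1}$ we have $(\cM(w)-c)^2\equiv A^2$ with probability $1$. Combining this with the fact that each $\cM(w_u)$ is an unbiased estimator of $w_u$ (which is what underlies Lemma~\ref{th:bias}, and follows directly from the perturbation probabilities in Eq.~\ref{eq:ldp}), I get
\[
\mathrm{Var}[\cM(w)]=\mathbb{E}\bigl[(\cM(w)-c)^2\bigr]-(w-c)^2=r^2\Bigl(\tfrac{e^\epsilon+1}{e^\epsilon-1}\Bigr)^2-(w-c)^2 ,
\]
and hence $\mathrm{Var}[\overline{\cM(w)}]=\frac{1}{n^2}\sum_{u=1}^n\bigl[r^2(\tfrac{e^\epsilon+1}{e^\epsilon-1})^2-(w_u-c)^2\bigr]$.

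To finish, each true weight lies in $[c-r,c+r]$, so $0\le (w_u-c)^2\le r^2$ for every $u$. Discarding all the $(w_u-c)^2$ terms yields the upper bound $\frac{r^2(e^\epsilon+1)^2}{n(e^\epsilon-1)^2}$ (attained when every $w_u=c$), and replacing each $(w_u-c)^2$ by its maximum $r^2$ yields the lower bound $\frac{r^2(e^\epsilon+1)^2}{n(e^\epsilon-1)^2}-\frac{r^2}{n}$ (attained when every $w_u$ sits at an endpoint of its range), which is exactly the claimed two-sided bound. There is no real obstacle here; the only points needing care are (i) stating the independence-across-clients assumption explicitly so the variance-of-a-sum step is legitimate, and (ii) using the per-weight variance in its exact form rather than the $O(e^{-\epsilon}+e^{-2\epsilon})$ form of Lemma~\ref{lemma:cmp}. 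If clients do not share a common clipping radius, the same argument still goes through termwise with $r_u$ in place of $r$.
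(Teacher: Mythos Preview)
Your proposal is correct and follows essentially the same route as the paper: use independence across clients to write $\mathrm{Var}[\overline{\cM(w)}]=\frac{1}{n^2}\sum_u\mathrm{Var}[\cM(w_u)]$, plug in the exact per-weight variance $r^2\bigl(\tfrac{e^\epsilon+1}{e^\epsilon-1}\bigr)^2-(w_u-c)^2$ (which the paper records in the proof of Lemma~\ref{lemma:cmp}), and then bound $(w_u-c)^2\in[0,r^2]$. Your derivation of the single-weight variance via the observation that $(\cM(w)-c)^2$ is almost surely constant is a slightly cleaner phrasing of the same computation.
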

It can be seen from Lemma \ref{lemma:var} that, with enough clients (a large $n$), we can achieve a low variance on the estimated average weight with a wide range of perturbed data (a large $r$) and a small privacy cost (a small $\epsilon$). 
%The variance is large when $r$ is large, $\epsilon$ is small, or $n$ is small.
With the above lemmas, we conclude the following theorem that shows the accuracy guarantee for calculating average weights on the cloud.
\begin{restatable}{theorem}{boundproof}
For any weight $w\in W$, with at least $1-\beta$ probability, $|\overline{\mathcal{M}(w)}-\Bar{w}|<O\left(\dfrac{r\sqrt{-\log\beta}}{\epsilon\sqrt{n}}\right).$
\end{restatable}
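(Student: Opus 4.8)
The plan is to apply a standard concentration argument to the average $\overline{\mathcal{M}(w)} = \frac{1}{n}\sum_u \mathcal{M}(w_u)$. First I would recall from Lemma~\ref{th:bias} that this estimator is unbiased, i.e.\ $\mathbb{E}[\overline{\mathcal{M}(w)}] = \bar{w}$, so that $|\overline{\mathcal{M}(w)} - \bar{w}|$ is exactly the deviation of a sum of independent bounded random variables from its mean. Each summand $\mathcal{M}(w_u)$ takes one of the two values $c \pm r\cdot\frac{e^\epsilon+1}{e^\epsilon-1}$, so it lies in an interval of width $2r\cdot\frac{e^\epsilon+1}{e^\epsilon-1}$; after dividing by $n$, each term of $\overline{\mathcal{M}(w)}$ ranges over an interval of width $\frac{2r}{n}\cdot\frac{e^\epsilon+1}{e^\epsilon-1}$. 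This is precisely the setup for Hoeffding's inequality.

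Next I would invoke Hoeffding's inequality for the sum of $n$ independent variables, each confined to an interval of length $L_u = \frac{2r}{n}\cdot\frac{e^\epsilon+1}{e^\epsilon-1}$, to obtain
\begin{equation*}
\Pr\!\left[\,|\overline{\mathcal{M}(w)} - \bar{w}| \ge \tau\,\right] \;\le\; 2\exp\!\left(-\frac{2\tau^2}{\sum_{u=1}^n L_u^2}\right) \;=\; 2\exp\!\left(-\frac{\tau^2 n (e^\epsilon-1)^2}{2 r^2 (e^\epsilon+1)^2}\right).
\end{equation*}
Setting the right-hand side equal to $\beta$ and solving for $\tau$ gives
\begin{equation*}
\tau \;=\; \frac{r\,(e^\epsilon+1)}{e^\epsilon-1}\cdot\sqrt{\frac{2\log(2/\beta)}{n}},
\end{equation*}
and since $\frac{e^\epsilon+1}{e^\epsilon-1} = O(1/\epsilon)$ for small $\epsilon$ (and $O(1)$ for $\epsilon$ bounded away from $0$), while $\sqrt{2\log(2/\beta)} = O(\sqrt{-\log\beta})$, this bound is $O\!\left(\frac{r\sqrt{-\log\beta}}{\epsilon\sqrt{n}}\right)$, which is the claimed inequality (with the event holding with probability at least $1-\beta$). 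Note that one could alternatively derive the sub-Gaussian tail from Bernstein's inequality using the variance bound in Lemma~\ref{lemma:var}, but Hoeffding is cleaner here since the summands are bounded.

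I do not expect a serious obstacle; the only mild subtleties are (i) confirming that the $w_u$'s and hence the $\mathcal{M}(w_u)$'s are independent across clients, which follows from each client perturbing its own weight independently, and (ii) tracking the $\epsilon$-dependence of the factor $\frac{e^\epsilon+1}{e^\epsilon-1}$ so that it is correctly absorbed into the $O(1/\epsilon)$ term — this uses $e^\epsilon - 1 \ge \epsilon$ and $e^\epsilon + 1 \le 2e^\epsilon$, giving $\frac{e^\epsilon+1}{e^\epsilon-1} \le \frac{2e^\epsilon}{\epsilon}$, which is $O(1/\epsilon)$ on any bounded range of $\epsilon$. The rest is the routine inversion of the Hoeffding bound.
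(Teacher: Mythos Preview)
Your proof is correct. The paper proves the same bound but via Bernstein's inequality rather than Hoeffding: it bounds $|\mathcal{M}(w_u)-w_u|\le \frac{2re^\epsilon}{e^\epsilon-1}$ and uses the per-client variance $\text{Var}[\mathcal{M}(w_u)]\le r^2\bigl(\frac{e^\epsilon+1}{e^\epsilon-1}\bigr)^2$ before simplifying the resulting tail to the same $O\!\left(\frac{r\sqrt{-\log\beta}}{\epsilon\sqrt{n}}\right)$. Since here the variance is of the same order as the squared range (both scale like $r^2/\epsilon^2$ for small $\epsilon$), Bernstein buys nothing over Hoeffding at the big-$O$ level, and your route is cleaner because it sidesteps the variance lemma entirely. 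The paper's route would yield a sharper constant in regimes where the variance is genuinely smaller than the worst-case range suggests (e.g.\ large $\epsilon$ with $w_u$ near $c$), but that refinement is absorbed by the asymptotic notation anyway. Your remark that Bernstein is an alternative is exactly what the paper does.
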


\noindent\textbf{Adaptive Range Setting}
If the weights' ranges are adaptive, clients may choose different $c,r$ to bound their weights and a client may choose different $c,r$ for each of its weights in each round. In such cases, LDP is still satisfied because the noise is applied to each weight separately. It is straightforward that the bias of the estimated average weight is still 0.
For variance, we can derive a corollary from Lemma~\ref{lemma:var}.
\begin{restatable}{corollary}{variancecorollary}
When client $u$ chooses a radius $r_u$, the variance bound of the estimated average weight is:
$\dfrac{(e^\epsilon+1)^2\sum_ur_u^2}{(e^\epsilon-1)^2n^2}-\dfrac{\sum_ur_u^2}{n^2} \le Var[\overline{\mathcal{M}(w)}] \le \dfrac{(e^\epsilon+1)^2\sum_ur_u^2}{(e^\epsilon-1)^2n^2}.$
\end{restatable}

% \begin{proof}
% We know the weight $p$'s range is $[c-r, c+r]$.
% If $p^*=c+r\cdot\frac{e^\epsilon+1}{e^\epsilon-1}$, then for any $p,p'\in[c-r, c+r]$.
% \begin{equation}
% \dfrac{\Pr[\mathcal{M}(p)=p^*]}{\Pr[\mathcal{M}(p')=p^*]}\le \dfrac{\underset{p}{\max}\Pr[\mathcal{M}(p)=p^*]}{\underset{p'}{\min}\Pr[\mathcal{M}(p')=p^*]} = \dfrac{r(e^\epsilon-1)+r(e^\epsilon+1)}{-r(e^\epsilon-1)+r(e^\epsilon+1)} = e^\epsilon.
% \end{equation}
% If $p^*=c-r\cdot\frac{e^\epsilon+1}{e^\epsilon-1}$, the above still holds.
% \end{proof}

%we actually transmit all weights of each local client to the same IP address.
%In Algorithm \ref{alg:noise}, unlike previous works，we observe that in the above method, the output $w^∗_i$ contains $\theta$ non-zero value, for the randomly chosen attribute $A_j$, where $\theta = |w^*_i|$.  % our mechanism doesn't select dimension, so no need to mention this.
%It makes all weights of a single user is individual to each other, hence, the user $u_i$ only needs to transmit 1 bit for each private weight to the cloud aggregator indicating its sign.  %irrelavant to us
%Further, as we show below the correctness of this method does not depend on the choice of $A_j$ as long as it is chosen uniformly at random. Therefore, the value of j can be obtained, e.g., using a public source of random numbers such as a hash value of the user’s ID. Therefore, the communication overhead between each user and the aggregator is exactly 1 bit. %irrelavant to us

%The following lemmas establish the theoretical guarantees of Algorithm \ref{alg:noise}.

\noindent\textbf{Parameter Shuffling}
As we claimed before, \system\ uses parameter shuffling to bypass the curse of dimensionality.
Since the weights are split and uploaded anonymously, the cloud is unable to link different weight values from the same client, so it cannot infer more information about a particular client. Therefore, it is sufficient to protect $\epsilon$-LDP for each weight. Likewise, because of the client anonymity, the cloud is unable to link weights from the same client at different iterations.
Without parameter shuffling, the privacy budget of LDP will grow to $Td\epsilon$, where $T$ is the iteration number and $d$ is the number of weights in the model.
Model shuffling \cite{balle2019privacy,cheu2019distributed,erlingsson2019amplification} can deduce the privacy budget to $d\epsilon$, which still cursed by the high dimension of DNN weight.
Similar discussion can be found in~\cite{erlingsson2014rappor,erlingsson2019amplification}. Unlike our approach, the previous works shuffle the ordered sequential information.

\begin{figure*}[tb]
\centering
\subfloat[Effect of $n$ on MNIST]{\includegraphics[width=1.8in]{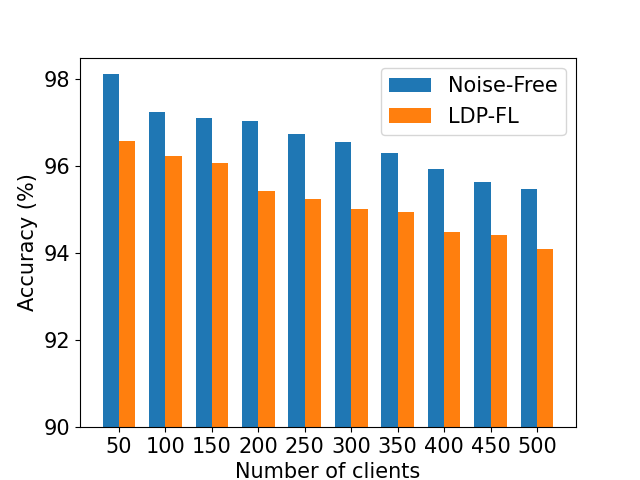}}\ \ \
\subfloat[Effect of $n$ on FMNIST]{\includegraphics[width=1.8in]{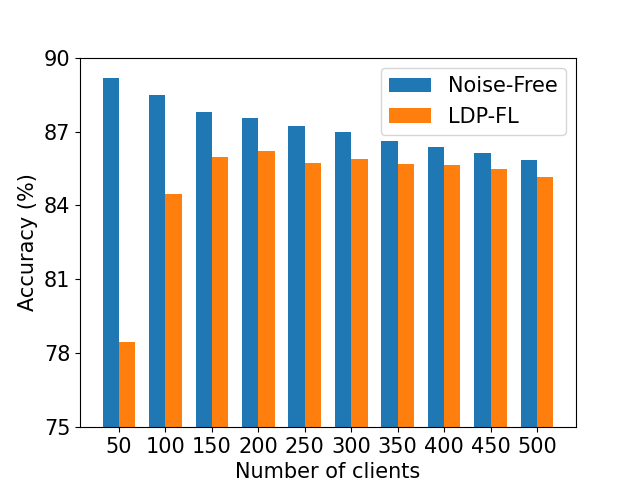}}\ \ \
\subfloat[Effect of $n$ on CIFAR-10]{\includegraphics[width=1.8in]{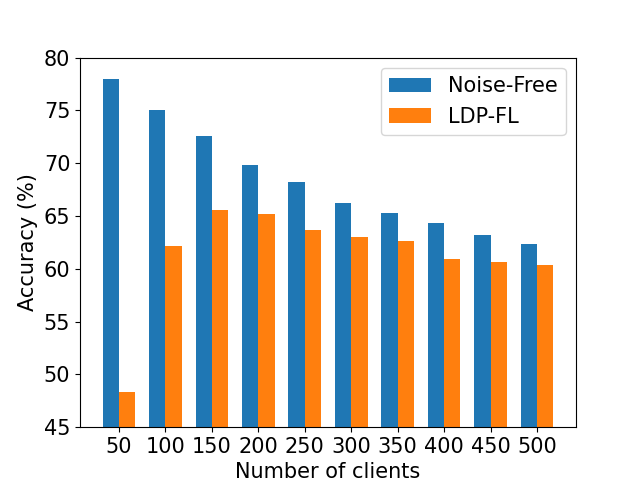}}\\
% \subfloat[Effect of $\epsilon$ on MNIST ($n=100$)]{\includegraphics[width=2in]{figures/mnist_eps.pdf}} \ \ \
% \subfloat[Effect of $\epsilon$ on FMNIST ($n=200$)]{\includegraphics[width=2in]{figures/fmnist_eps.pdf}} \ \ \
% \subfloat[Effect of $\epsilon$ on CIFAR ($n=500$)]{\includegraphics[width=2in]{figures/cifar_eps.pdf}} \\
\caption{Effect of $n$ on the training accuracy.} \label{fig:budget}
\vspace{-5pt}
\end{figure*}

\vspace{-5pt}
\section{Experiments}

%=====================================================

In this section, 
%image classification tasks and a real mobile application are used as experimental examples to evaluate the effectiveness of \system. 
we examine the effect of different weights based on the image benchmark datasets, MNIST \cite{LeCun1998}, Fashion-MNIST (FMNIST) \cite{xiao2017fashion} and then verify the performance improvement based on CIFAR-10 \cite{krizhevsky2009learning} and the preceding three datasets. 
% For image classification tasks, three widely used convolutional deep neural networks (Conv-Small, VGG-Small) are implemented in \system\ \cite{Laine2017, Park2017}.
% For MNIST, we use Conv-Small as the cloud-side DNN. In the performance comparison, VGG-Small is used for CIFAR-10.
% To verify the performance improvement brought by \system, we need to mainly evaluate the performance with the number of the client.
For MNIST and FMNIST, we implement a two-layer CNN for image classification. However, for CIFAR-10, the default network from Pytorch library only achieves around 50\% accuracy without data perturbation, and we re-design a small VGG \cite{simonyan2014very} for the task.
The training data and the testing data are fed into the network directly in each client, and for each client, the size of the training data is the total number of the training samples divided by the number of the clients.
In this case, a larger number of clients indicates the small size of the training data of each client.
For each weight, we clip them in a fixed range. In this work, we set $(c, r)=(0, 0.075)$ and $(0, 0.015)$ by default for MNIST and FMNIST, respectively. However, for CIFAR-10, instead of using a fixed range, due to the complexity of the model, we set $c$ and $r$ adaptively by the weight range of each layer.
The learning rate $\gamma$ is set as 0.03 for MNIST/FMNIST and 0.015 for CIFAR-10.
Considering the randomness during perturbation, we run the test experiments ten times independently to obtain an averaged value.
In order to evaluate the performance of different learning approaches, we use different metrics including accuracy for utility, $\epsilon$ for privacy cost and $m$, the number of communication rounds (CRs), for communication cost. Any approach with a high accuracy, a small $\epsilon$ and a small communication round indicates a good and practical solution.
The proposed models are implemented using Pytorch, and all experiments are done with a single GPU NVIDIA Tesla V100 on the local server. 
Experiments on MNIST and FMNIST can be finished within an hour with 10 CRs, and experiments on CIFAR-10 need about 2 hours with 15 CRs.

%The federated learning approach is an open-source \footnote{The open source implementation given in \url{https://github.com/AshwinRJ/Federated-Learning-PyTorch}}, and our code will be released soon.

% The proposed models are implemented using Pytorch \cite{Tensorflow}. For image classification tasks, the learning rate and batch size are set as 0.03 and 10, respectively. The numbers of epochs for MNIST, and CIFAR-10 are 10 and 100, respectively. Considering the randomness during perturbation, we run the test experiments ten times independently to obtain an averaged value. The code is provided for reproducibility.

% %=====================================================
% \begin{figure*}[tb]
% \centering
% \subfloat[MNIST (5,10\%)]{\includegraphics[width=1.625in]{figures/mnist-p.jpg}}\ \ \
% \subfloat[SVHN (2,5\%)]{\includegraphics[width=1.625in]{figures/svhn-p.jpg}}
% \caption{Effect of $\eta$ and $\lambda$.} \label{fig:parameter}
% \end{figure*}
% %======================================================

\subsection{Parameter Analysis}
%======================================================
\begin{figure}[t]
\centering
\subfloat[Communication Round $m$]{\includegraphics[width=1.65in]{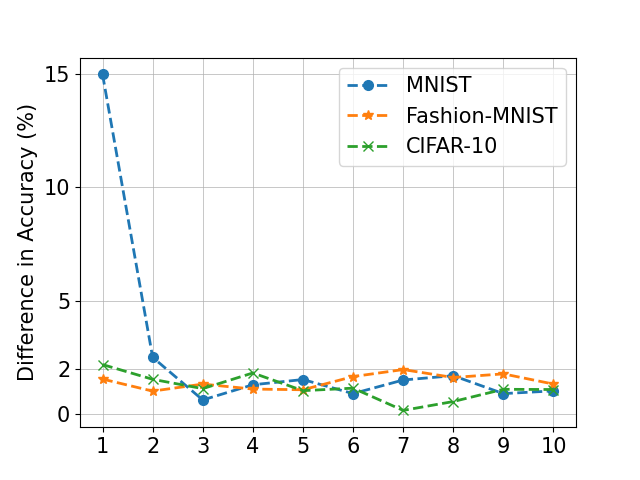}}
\subfloat[Fraction of Clients $f_r$]{\includegraphics[width=1.65in]{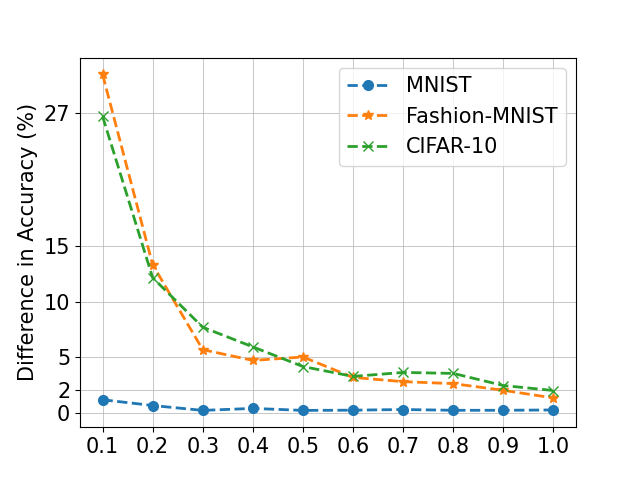}}
\caption{Parameter analysis on communication round and fraction of clients. For MNIST, FMNIST and CIFAR-10, we set the number of total clients as 100, 200, 500 respectively.} \label{fig:param}
\vspace{-5pt}
\end{figure}

%=====================================================
\begin{figure*}[tb]
\centering
\subfloat[Effect of $\epsilon$ on MNIST]{\includegraphics[width=1.65in]{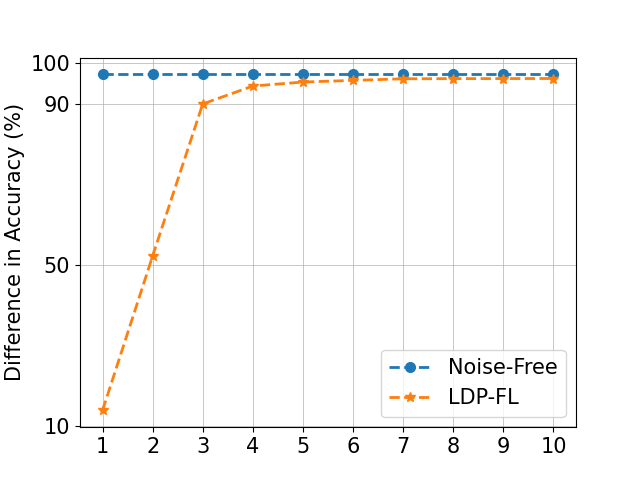}} \ \ \
\subfloat[Effect of $\epsilon$ on FMNIST]{\includegraphics[width=1.65in]{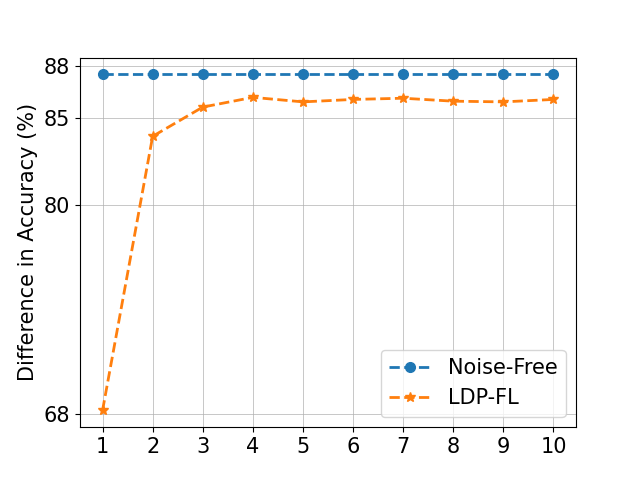}} \ \ \
\subfloat[Effect of $\epsilon$ on CIFAR-10]{\includegraphics[width=1.65in]{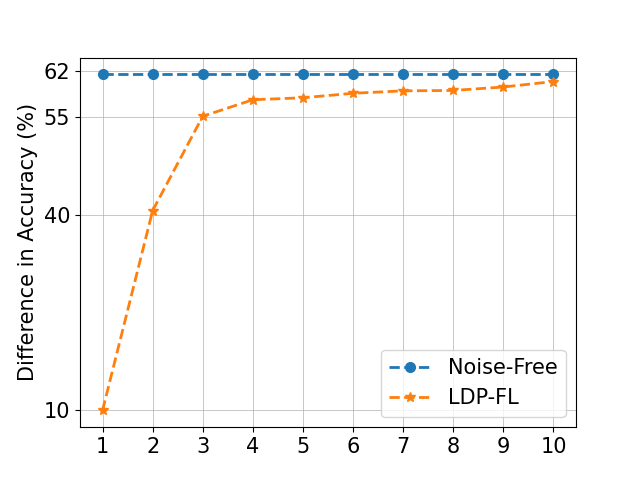}} \ \ \
\subfloat[Effect of $r$ on CIFAR-10]{\includegraphics[width=1.7in]{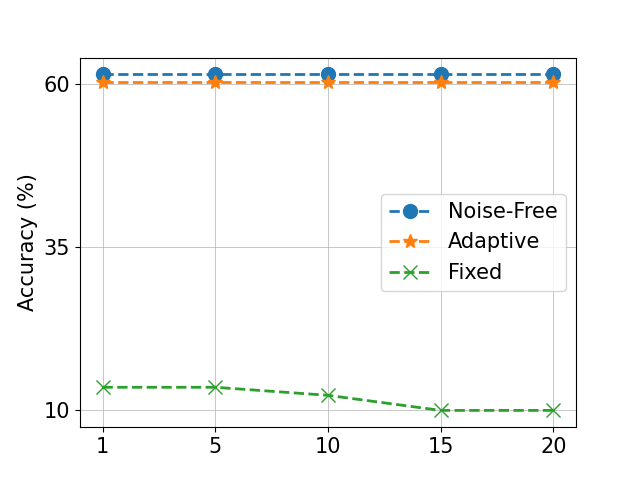}}
\caption{[a-c]: Effect of $\epsilon$ on training accuracy; [d]: Effect of $r$ on the training accuracy} \label{fig:p1}
\vspace{-5pt}
\end{figure*}
%======================================================

Fig.~\ref{fig:param} shows the change of the performance with varying $f_r$ and $m$. In this work, we evaluate the proposed model on MNIST, FMNIST and CIFAR.
To evaluate the parameter $f_r$, we first fix the number of the client as 500. It can be found that when $f_r$ is too small, it does not affect the performance on MNIST, but affects significantly the performance on FMNIST and CIFAR-10.
When $f_r$ is close to 1, \system\ can achieve almost the same performance as the noise-free results on MNIST, FMNIST and CIFAR-10, as shown in Fig. \ref{fig:p1}(a-c).
Another important parameter is the communication rounds between cloud and clients. It is not hard to see that with more communications, we can train a better model on all datasets through the proposed model. However, due to the complexity of data and tasks, CIFAR-10 needs more communication rounds for a better model.

Fig. \ref{fig:budget} shows that \system\ can achieve a high accuracy with a low privacy cost because of the new design of the communication and the new local data perturbation. While increasing the number of clients $n$ in training, the \system\ can perform as close as the noise-free federated learning. Compared with MNIST ($n=100$, $\epsilon=1$), FMNIST ($n=200$, $\epsilon=5$), CIFAR-10 ($n=500$, $\epsilon=10$) needs more clients, which indicates that for a more complex task with a larger neural network model, it requires more local data and more clients for a better performance against the data perturbation. The privacy budget also affects the performance of the central model, but more details will be discussed in the next section.

\subsection{Performance Analysis and Comparison}
In Fig.\ref{fig:p1}(a-c), \system\ can achieve 96.24\% accuracy with $\epsilon=1$ and $n=100$, 86.26\% accuracy with $\epsilon=4, n=200$ and 61.46\% accuracy with $\epsilon=10, n=500$ on MNIST, FMNIST and CIFAR-10, respectively. 
Our results are very competitive comparing to other previous works.
\cite{geyer2017differentially} first apply DP on federated learning. While they use $100$ clients, they can only achieve 78\%, 92\% and 96\% accuracy with $(\epsilon, m) = (8, 11)$, $(8, 54)$ and $(8, 412)$ guaranteed on MNIST with differential privacy, where $(\epsilon, m)$ represents the privacy budget and the communication rounds.
\cite{bhowmick2018protection} first utilize the LDP in federated learning. Due to the high variance of their mechanism,  it requires more than 200 communication rounds and spends much more privacy budgets, \textit{i.e.,} MNIST ($\epsilon=500$) and CIFAR-10 ($\epsilon=5000$). 
Last, the most recent work \cite{truex2020ldp} utilizes Condensed Local Differential Privacy ($\alpha$-CLDP) into federated learning, with 86.93\% accuracy on FMNIST dataset. However, $\alpha$-CLDP achieved that performance by requiring a relatively large privacy budget $\epsilon = \alpha \cdot 2c \cdot10^\rho$ (e.g. $\alpha=1, c=1, \rho=10$), which results in a weak privacy guarantee. Compared to existing works, our approach needs many fewer communication rounds between clients and cloud (\textit{i.e.,} 10 for MNIST, 15 for FMNIST and CIFAR-10), which makes the whole solution more practical in real scenarios.
Overall, \system\ achieves better performance on both effectiveness, efficiency and privacy cost than prior works.
% Note that we only focus on federated learning with privacy protection in this paper, comparison with other communication approaches with privacy protection that does not apply to federated learning, such as PATE \cite{papernot2017semi,papernot2018scalable}, is out of scope.

% \begin{table*}[t]
%     \centering
%   \resizebox{4.750in}{!}{%
%   \begin{tabular}{|l|c|c|c|c|c|c|}
% \hline
%                   & noise-free   & $\epsilon$=0.1 & $\epsilon$=0.5 & $\epsilon$=1   & $\epsilon$=5   & $\epsilon$=10  \\ \hline
% %MNIST (n=100)      & 97.26\% & -       & -       & -       & -       & -       \\ \hline
% MNIST (n=100)  & 97.26\% & 13.84\% & 95.36\% & 96.24\% & 96.34\%       & 96.39\%       \\ \hline
% %FMNIST (n=200)      & 87.53\% & -       & -       & -       & -       & -       \\ \hline
% FMNIST (n=200) & 87.53\% & 10.00\% & 10.00\% & 68.21\% &    85.95\%    & 86.10\%       \\ \hline
% %CIFAR (n=500)      & 61.46\% & -       & -       & -       & -       & -       \\ \hline
% CIFAR10 (n=500)  & 61.46\% & 10.00\%       & 10.00\%       & 10.00\% & 58.89\% & 60.37\% \\ \hline
% \end{tabular}
%   }
% \caption{Summary of Accuracy and Privacy Budget Analysis}\label{table:performance}
% \vspace{-8pt}
% \end{table*}

\subsection{Analysis of Privacy Budget}

% In Table~\ref{table:performance}, we present the accuracy results on MNIST, FMNIST, CIFAR10 for selected values of $\epsilon$. As illustrated, the training models on MNIST, FMNIST, and CIFAR10 can achieve a comparable accuracy to the noise-free results after $\epsilon$ is larger than 0.5, 5, 10, respectively.
% The accuracy drops to around 10\% accuracy after $\epsilon$ decreases to 0.1 for MNIST, 0.5 for FMNIST and 1 for CIFAR-10, respectively. This gives a reference of how to select appropriate value for $\epsilon$.
    
%The privacy budget represents the privacy cost in the framework. 
To analyze the impact of privacy budgets on performance, we present the accuracy results with $\epsilon$ from 0.1 to 1 for MNIST, 1 to 10 FMNIST and 1 to 10 for CIFAR-10 in Fig. \ref{fig:p1} (a-c). % and Table~\ref{table:performance}.
It shows that \system\ can maintain the accuracy at a high value for a wide range of privacy budgets on MNIST ($\epsilon > 0.3$), FMNIST ($\epsilon > 2$) and CIFAR-10 ($\epsilon > 5$).
While $\epsilon<0.3$ for MNIST, $\epsilon < 2$ for FMNIST and $\epsilon < 3$ for CIFAR-10, the accuracy drops significantly.
It is obvious that more complex data and tasks require more privacy cost.
The main reason is that the complex task requires a sophisticated neural network, which contains a large number of model weights.
Meanwhile, the range of each weight is also wider in the complex task, which causes a larger variance for the estimated weights.
%In addition, Figure~\ref{fig:budget} (a-c) shows the accuracy of \system\ increases with the number of clients.
%This matches our observations in the previous analysis of bias and variance of \system.

% In Table~\ref{table:performance}, we present the accuracy results on MNIST, FMNIST, CIFAR10 for selected values of $\epsilon$. As illustrated, the training models on MNIST, FMNIST, and CIFAR10 can achieve a comparable accuracy to the noise-free results after $\epsilon$ is larger than 0.5, 5, 10, respectively.
% The accuracy drops to around 10\% accuracy after $\epsilon$ decreases to 0.1 for MNIST, 0.5 for FMNIST and 1 for CIFAR-10, respectively. This gives a reference of how to select appropriate value for $\epsilon$.
%Due to the larger random noise required by the stronger privacy guarantee, 
%the performance drops to almost 10\%, which is the same as a random guess in these three datasets.
%These results argue that \system\ applies to different privacy requirements. It can effectively improve the performance even when the privacy budget is relatively tight than previous works and other mechanisms. 

\subsection{Analysis of Adaptive Range Setting}

\begin{table}[tb]
    \centering
  
  \resizebox{2.450in}{!}{%
  \begin{tabular}{|l|c|c|c|}
\hline
                   & $\epsilon$   & Fixed & Adaptive  \\ \hline
%MNIST (n=100)      & 97.26\% & -       & -       & -       & -       & -       \\ \hline
MNIST  (n=100) & 5 & 96.34\% & 96.06\%       \\ \hline
MNIST  (n=100) & 10 & 96.39\% & 96.11\%       \\ \hline
FMNIST  (n=200) & 5 & 85.95\% & 87.54\%       \\ \hline
FMNIST  (n=200) & 10 & 86.10\% & 87.57\%       \\ \hline
CIFAR (n=500)  & 5 & 13.54\%       & 58.89\% \\ \hline
CIFAR (n=500)  & 10 & 14.28\%       & 60.37\%  \\ \hline
\end{tabular}
  }
  \caption{Comparison of Adaptive and Fixed Weight Range Settings} \label{table:adaptive}
  \vspace{-10pt}
\end{table}

%The privacy budget represents the privacy cost in the framework. 
To analyze the effectiveness of the adaptive weight range, we compare the accuracy with fixed range versus that with adaptive range for all three benchmark datasets in Table \ref{table:adaptive}. We run the experiments for the privacy budget of $\epsilon = 5$, and $\epsilon = 10$, respectively. Clearly, for CIFAR-10, the adaptive range setting presents the greatest performance improvement over the fixed range setting. For example, in CIFAR-10 with $\epsilon =10$, if $(c,r)=(0,1)$ is chosen for the fixed ranging setting, the adaptive range setting improves the accuracy by 45\% over the fixed range setting.
While on less complex datasets such as FMNIST and MNIST, we observe that the advantage of adaptive range setting gradually disappears. Specifically, on the MNIST dataset, there is no obvious difference in the accuracy achieved by the adaptive range setting and by the fixed range. This is because the model on CIFAR-10 is much deeper and more complex, such that the model weights' range in each layer differs significantly. In such a case, the adaptive range setting is able to reduce the variance of the estimated weights in some layers when the corresponding weight range is small. This improves the model accuracy dramatically. The results also explain why previous approach using fixed range can achieve a good accuracy on simple and shallow model. 

We take a closer look at the performance for CIFAR-10 with $\epsilon =10$. As illustrated in Figure~\ref{fig:p1}(d), no matter how we tune the value of $r$ for fixed range setting, the model accuracy always stay around 20\% and can never be improved. The reason might be that the larger $r$ increases the variance of model weights estimation, while the smaller $r$ may generate the perturbed weights irrelevant to the true data range and cause an accuracy loss. We argue that for deeper neural networks, the fixed range setting might completely fail.

% Among three datasets, MNIST is the simplest dataset with two-layer neural networks, and we have not seen any improvements while using the adaptive range setting.
% These results may explain previous studies can achieve good performance with the fixed range setting on simple datasets and tasks.

% Compared to MNIST, FMNIST is a little bit more complex than MNIST. In this case, we find the accuracy with using adaptive setting is also consistently a little bit better than the fixed setting.

% About CIFAR-10, we first set the $(c, r) = (0, 1)$, and the results show that the fixed range setting returns a, and the adaptive setting can improve the accuracy by more than 45\% accuracy.

% The results demonstrate the adaptive setting significantly improves the classification accuracy on deeper networks.

% The main reason is that the complex task requires a sophisticated neural network where each weight range is wider in the complex task.

% In addition, Figure~\ref{fig:range} shows \system\ with different fixed range settings on CIFAR-10 does not work, so we have to use the adaptive setting for complex network and tasks.
% Overall, due to the additional time cost of encrypted communications, we recommend simple tasks that could use one-round communication like previous works, and complex tasks should use adaptive range settings. 

% \begin{figure}[h]
% \centering
% {\includegraphics[width=2in]{figures/r_range.png}}
% \caption{Effect of $r$ on CIFAR-10 ($\epsilon = 10$).} \label{fig:range}
% \end{figure}
\vspace{-5pt}
\section{Related Work}

% Differential privacy \cite{dwork2006,Dwork2006our}  \cite{dwork2014algorithmic} provides a mathematically provable framework to design and evaluate a privacy protection scheme. Traditional differential privacy \cite{dwork2006}, also known as Central Differential Privacy (CDP), considers the private data publication of the aggregated raw data. It usually assumes that a trusted data curator in the cloud perturbs the data statistic information by differential privacy mechanism. However, such assumption is often unrealistic in many use cases. Therefore, a new differential privacy model, Local Differential Privacy (LDP), was recently proposed \cite{Kasivis2008, duchi2013local}, in which, the raw data is perturbed before being sent to the data curator in the cloud. Since it was introduced, LDP has been widely studied in academia, e.g.,~\cite{duchi2013local,duchi2014privacy,erlingsson2014rappor,duchi2018minimax, wang2019collecting}.

Differential privacy \cite{dwork2014algorithmic} provides a mathematically provable framework to design and evaluate a privacy protection scheme. Traditional differential privacy \cite{dwork2014algorithmic}, also known as Central Differential Privacy (CDP), considers the private data publication of the aggregated raw data. It usually assumes that a trusted data curator in the cloud perturbs the data statistic information by differential privacy mechanism. However, such assumption is often unrealistic in many use cases. Therefore, a new differential privacy model, Local Differential Privacy (LDP), was recently proposed \cite{duchi2013local}, in which, the raw data is perturbed before being sent to the data curator in the cloud.

Recently, more works studies how to use DP in federated learning \cite{geyer2017differentially,mcmahan2017learning} or use LDP in the federated learning \cite{bhowmick2018protection,nguyen2016collecting,seif2020wireless,truex2020ldp}.
However, existing approaches can not be applied practically on deep learning. 
Some of them \cite{nguyen2016collecting,seif2020wireless} only focus on small tasks and simple datasets and do not support deep learning approaches yet.
The other works \cite{bhowmick2018protection,truex2020ldp} studied the LDP on federated learning problem. However, as we discussed in the experiments part, both of them hardly achieve a good performance with a small limited budget.
Bhowmick et al. \cite{bhowmick2018protection} is the first work focus on the LDP on federated learning with deep learning models. However, it first need a strong assumption that attacker has little prior acknowledge about user data. Meanwhile, it spends $\epsilon = 500$ for MINST and $\epsilon = 5000$ for CIFAR-10, which means the LDP is not practical and needs traditional differential privacy for second protection.
Last, they did not report the absolute performance on MNIST and CIFAR-10. 
Truex et al. \cite{truex2020ldp} is another recent work. It uses It contains two obvious problems. First, it replaces LDP by Condensed LDP (CLDP), which spends a large privacy budget $\epsilon$ with a large clip bound reported in the work. Second, this work does not provide a complete privacy utility analysis of the number of weights and communication rounds.
Due to the high variance of the noises in their approaches, it requires more communication rounds between cloud and clients, which spends more privacy cost in their framework.
Our approach focus on solving the weakness of all previous approaches and accelerates the practical solution on complex dataset.

% Besides FL communication framework, there are other communication frameworks with differential privacy protection, such as teacher-student communication \cite{papernot2017semi,papernot2018scalable}. The teacher models are fixed model who want to privately transfer their knowledge to the student side, which is very different to FL communication. So, we do not compare them in this paper.

\vspace{-5pt}
\section{Conclusion}
\vspace{-2pt}

In this paper, we propose a new LDP mechanism for federated learning with DNNs. By giving a generalized data perturbation that is applicable for arbitrary input ranges, we proposed an adaptive range setting for improving privacy/utility trade-off. We also designed a parameter shuffling mechanism to mitigate the privacy degradation caused by high data dimension and many query iterations.
Our empirical studies demonstrate that the performance of \system\ is superior to the previous related works on the same image classification tasks. Hopefully, our work can considerably accelerate the practical applications of LDP in federated learning.

% In the future, several research topics can be further explored, such as preventing client anonymization from side-channel attacks, designing a better data perturbation mechanism, and applying the proposed mechanism to natural language processing, speech recognition or graph learning. Moreover, it is of profound significance to generalize the proposed privacy-preserving techniques to other scenarios.

\newpage

\bibliography{reference}
\bibliographystyle{named}

\newpage

\appendix

\section{Proof}
\epsproof*
\begin{proof}
We know the weight $w$'s range is $[c-r, c+r]$.
If $w^*=c+r\cdot\frac{e^\epsilon+1}{e^\epsilon-1}$, then for any $w,w'\in[c-r, c+r]$.
\begin{equation}
\begin{aligned}
\dfrac{\Pr[\mathcal{M}(w)=w^*]}{\Pr[\mathcal{M}(w')=w^*]}&\le \dfrac{\underset{w}{\max}\Pr[\mathcal{M}(w)=w^*]}{\underset{w'}{\min}\Pr[\mathcal{M}(w')=w^*]} \\
&= \dfrac{r(e^\epsilon-1)+r(e^\epsilon+1)}{-r(e^\epsilon-1)+r(e^\epsilon+1)} = e^\epsilon.
\end{aligned}
\end{equation}
If $w^*=c-r\cdot\frac{e^\epsilon+1}{e^\epsilon-1}$, the above still holds.
\end{proof}

\noiseproof*
\begin{proof}
For any weight update $w_u$ from any client $u$,
\begin{equation}
\begin{aligned}
\mathbb{E}[\mathcal{M}(w_u)]&=\left(c+r\cdot\dfrac{e^\epsilon+1}{e^\epsilon-1}\right)\cdot \dfrac{(w_u-c)(e^\epsilon-1)+r(e^\epsilon+1)}{2r(e^\epsilon+1)} \\
 + & \left(c-r\cdot\dfrac{e^\epsilon+1}{e^\epsilon-1}\right)\cdot \dfrac{-(w_u-c)(e^\epsilon-1)+r(e^\epsilon+1)}{2r(e^\epsilon+1)}\\
= & \dfrac{2r(w_u-c)(e^\epsilon-1)}{2r(e^\epsilon-1)}+c \\
= & w_u.
\end{aligned}
\end{equation}
\begin{equation}
\begin{aligned}
\mathbb{E}[\overline{\mathcal{M}(w)}]&=\mathbb{E}\left[\dfrac{1}{n}\sum_u \mathcal{M}(w_u)\right] = \dfrac{1}{n}\sum_u \mathbb{E}[\mathcal{M}(w_u)] \\
&= \dfrac{1}{n}\sum_u w = \Bar{w}.
\end{aligned}
\end{equation}
\end{proof}

\varcmpproof*
\begin{proof}
The variance of each reported noisy weight $w_u$ is
\begin{align}
\text{Var}[\mathcal{M}(w_u)] &= \mathbb{E}(\cM^2(w_u)) - \mathbb{E}^2(\mathcal{M}(w_u)) \nonumber \\ 
&= r^2\left(\dfrac{e^\epsilon+1}{e^\epsilon-1}\right)^2-(w_u-c)^2 \\
%<r^2(1+\dfrac{2}{e^\epsilon-1})^2
&\le r^2\left(\dfrac{e^\epsilon+1}{e^\epsilon-1}\right)^2. \nonumber
\end{align}
\end{proof}

\varianceproof*
\begin{proof}
The variance of the estimated average weight is
\begin{align}
\text{Var}[\overline{\mathcal{M}(w)}] &= \dfrac{1}{n^2}\sum_u\text{Var}[\mathcal{M}(w_u)] \nonumber \\
&= \dfrac{r^2(e^\epsilon+1)^2}{n(e^\epsilon-1)^2} -\dfrac{1}{n^2}\sum_u(w-c)^2.
\end{align}
\text{So the bound of variance is }
\begin{equation}
\label{eq:var}
 \dfrac{r^2(e^\epsilon+1)^2}{n(e^\epsilon-1)^2}-\dfrac{r^2}{n} \le \text{Var}[\overline{\mathcal{M}(w)}] \le \dfrac{r^2(e^\epsilon+1)^2}{n(e^\epsilon-1)^2}.
\end{equation}
\end{proof}

\boundproof*
\begin{proof}
For each client $u$, $|\mathcal{M}(w_u)-w_u|\le r\cdot\dfrac{e^\epsilon+1}{e^\epsilon-1}+r = \dfrac{2re^\epsilon}{e^\epsilon-1}$ and
$\text{Var}[\mathcal{M}(w_u)]=\text{Var}[\mathcal{M}(w_u)-w_u]=\mathbb{E}[(\mathcal{M}(w_u)-w_u)^2]-\mathbb{E}^2[\mathcal{M}(w_u)-w_u]=\mathbb{E}[(\mathcal{M}(w_u)-w_u)^2]-(w_u-w_u)^2=\mathbb{E}[(\mathcal{M}(w_u)-w_u)^2]$,
by Bernstein's inequality, 
\begin{equation}
\begin{aligned}
&\Pr[|\overline{\mathcal{M}(w)}-\Bar{w}|\ge\lambda] \\
&= \Pr\left[\left|\sum_u(\mathcal{M}(w_u)-w_u)\right|\ge n\lambda\right]\\
&\le 2\cdot\exp\left(-\dfrac{\dfrac{1}{2}n^2\lambda^2}{\sum_u\mathbb{E}((\mathcal{M}(w_u)-w_u)^2)+ \dfrac{2n\lambda re^\epsilon}{3(e^\epsilon-1)}}\right)\\
&= 2\cdot\exp\left(-\dfrac{n^2\lambda^2}{2\sum_u\text{Var}[\mathcal{M}(w_u)]+ \dfrac{4n\lambda re^\epsilon}{3(e^\epsilon-1)}}\right)\\
&\le 2\cdot\exp\left(-\dfrac{n^2\lambda^2}{2nr^2\left(\dfrac{e^\epsilon+1}{e^\epsilon-1}\right)^2+ \dfrac{4n\lambda re^\epsilon}{3(e^\epsilon-1)}}\right)\\
&= 2\cdot\exp\left(-\dfrac{n\lambda^2}{2r^2\left(\dfrac{e^\epsilon+1}{e^\epsilon-1}\right)^2+ \dfrac{4\lambda re^\epsilon}{3(e^\epsilon-1)}}\right)\\
&= 2\cdot\exp\left(-\dfrac{n\lambda^2}{r^2O(\epsilon^{-2})+\lambda rO(\epsilon^{-1})}\right).
\end{aligned}
\end{equation}
In other words, there exists $\lambda=O\left(\dfrac{r\sqrt{-\log\beta}}{\epsilon\sqrt{n}}\right)$ such that $|\overline{\mathcal{M}(w)}-\Bar{w}|<\lambda$ holds with at least $1-\beta$ probability.
\end{proof}

\end{document}